\newcommand{\Q}{\mathcal{Q}}
\newcommand{\C}{\mathcal{C}}
\newcommand{\T}{\mathcal{T}}
\renewcommand{\O}{\hat{O}}
\newcommand{\R}{\mathcal{R}}
\renewcommand{\L}{\mathcal{L}}
\renewcommand{\l}{\ell}
\newcommand{\leafset}{\mathcal{X}}
\newcommand{\treeord}[2]{T_{#1}(#2)}
\newcommand{\treerev}[2]{\overleftarrow{T}_{#1}( #2 )}
\newcommand{\maxwqtet}{WQC}
\newtheorem{proposition}{Proposition}[section]
\newtheorem{claim}{Claim}
\title{On the Weighted Quartet Consensus problem\footnote{This work was partially supported by the 
Natural Sciences and Engineering Research Council of Canada (NSERC) and the Mitacs Globalink Campus France program.}}
\titlerunning{On the Weighted Quartet Consensus problem} %optional, in case that the title is too long; the running title should fit into the top page column
\author[1]{Manuel Lafond}
\author[2]{Celine Scornavacca}
\affil[1]{Department of Mathematics and Statistics, University of
Ottawa, 585 King Edward Ave.,  K1N 6N5 Ottawa, Canada \\  \texttt{mlafond2@uottawa.ca}}
\affil[2]{Institut des Sciences de l'Evolution -- Universit\'e Montpellier,  CNRS, IRD, EPHE, Place Eug\`ene Bataillon, 34095 Montpellier, France\\% Institut des Sciences de l'Evolution -- Université Montpellier, Place Eugène Bataillon/Montpellier, France\\
  \texttt{celine.scornavacca@umontpellier.fr}}
\authorrunning{M. Lafond and C. Scornavacca} %mandatory. First: Use abbreviated first/middle names. Second (only in severe cases): Use first author plus 'et. al.'
\subjclass{F.2.2 G.2.1 G.2.2}
\keywords{Phylogenetic tree, consensus tree, quartets, complexity, fixed-parameter tractability, approximability.}% mandatory: Please provide 1-5 keywords
\newcommand{\new}[1]{#1}
\newcommand{\newml}[1]{#1}
\begin{document}

\maketitle

\begin{abstract}
In phylogenetics, the \emph{consensus} problem consists in summarizing a set of phylogenetic trees that all classify the same set of species into a single tree. Several definitions of consensus exist in the literature; in this paper we focus on the \textsc{Weighted Quartet Consensus} problem, a problem  with unknown complexity status \new{so far}. Here we prove that the \textsc{Weighted Quartet Consensus} problem is NP-hard and we give a $1/2$-factor approximation for this problem.
During the process, we propose a derandomization procedure of a previously known \emph{randomized} $1/3$-factor approximation.
%several results on the approximability and 
We also investigate the fixed-parameter tractability of this problem. 
\end{abstract}

\section{Introduction}

\emph{Phylogenetics} is the branch of biology that studies evolutionary relationships among different species. 
These relationships are represented via \emph{phylogenetic trees} --  acyclic connected graphs with \new{leaves} labeled by species -- \new{which} are  reconstructed  
from molecular and morphological data \cite{felsenstein2004inferring}.
 One fundamental problem in phylogenetics is to summarize \new{the information contained in} a set of unrooted trees $\mathcal{T}$ classifying \new{the} same set of species into a single tree $T$. %containing all these species. 
 The set $\mathcal{T}$  can consist of optimal trees for a single data set, of trees issued from a bootstrap analysis of a unique data set, or even of trees issued from the analysis of different data sets. 
Several consensus methods have been proposed in the past, the \new{probably most known} are the strict consensus  \cite{sokal1981taxonomic,mcmorris1983view} 
and the majority-rule consensus  \cite{margush1981consensusn,barthelemy1986median}.  For a \new{survey}, see \cite{bryant2003classification}. 

In this paper we focus on the \textsc{Weighted Quartet Consensus}  (\maxwqtet) problem \cite{MirarabThesis}, also called the \textsc{Median Tree with Respect to Quartet Distance} problem \cite{bansal2011comparing} and \textsc{Quartet Consensus} problem in 
\cite{jiang2001polynomial}.
Roughly speaking, this problem consists in finding a consensus tree maximizing the weights of the 4-leaf trees -- \emph{quartets} -- it displays, where the weight of a quartet is defined as its frequency in the set of input trees (for a more formal definition, see next section).

More general versions of this problem\new{,} where the input trees  are allowed to have missing species or where the weight of a quartet is not defined w.r.t. a set of trees, are known to be NP-hard \cite{steel1992complexity} (and in fact, even Max-SNP-Hard), but the complexity of the \maxwqtet~problem  \new{has remained open so far}. 
This problem has been conjectured to be NP-hard \cite{bansal2011comparing,MirarabThesis}, and heuristics have recently been implemented and widely used, for instance ASTRAL  \cite{mirarab2014astral}, which is a practical implementation of~Bryant and Steel's work from\cite{bryant2001constructing} (in fact, we show that the ASTRAL algorithm is a $2$-approximation for the minimization version of \maxwqtet).
So far, the best known approximation algorithm for the \maxwqtet~problem consists in choosing a random tree as a solution~\cite{jiang2001polynomial}.
This tree is expected to contain a third of the quartets from the input trees, and so it is a randomized factor $1/3$ approximation.
In \cite{bansal2011comparing}, the \emph{minimization} version of the problem is studied, where the objective is to find a median tree $T$ minimizing the 
sum of quartet \emph{distances} between $T$ and the input trees 
\newml{(the quartet distance 
between two trees $T_1$ and $T_2$ is defined 
as the number of quartets in $T_1$ that are not in $T_2$)}.
A 2-approximation algorithm is given, based on the fact that the quartet distance is a 
metric~\cite{byrka2010new,bansal2011comparing}.

A related problem that has received  more attention is the \emph{Complete Maximum Quartet Compatibility} \new{problem} (CMQC) (see~\cite{berry1999quartet,berry2000practical,jiang2001polynomial,gramm2001minimum,wu2005lookahead, wu2007quartet,chang2010new,morgado2008pseudo,morgado2010combinatorial}).  In the CMQC problem, we are given, for each \new{set $S$ of four species},
exactly one quartet on $S$, and the objective is to find a tree containing a maximum number of quartets from the input.  
This can be seen as \newml{a version} of \maxwqtet~in which
each set of four species has one quartet of weight $1$, and the others have weight $0$.
The CMQC and \maxwqtet~are however fundamentally different.
Although one could apply an algorithm for CMQC to WQC (by keeping only the most frequent quartet on each set of four taxa), 
maximizing the most-frequent quartets may enforce the presence of many low-frequency quartets.
A better solution may prefer more of the middle-frequency quartets, and we give an example of this phenomenon.
It was shown in \cite{jiang2001polynomial} that the CMQC \new{problem} admits a polynomial time approximation scheme (PTAS), but it can only be extended to 
\maxwqtet~intances on a constant number of trees.
Also, CMQC was shown in~\cite{gramm2001minimum,chang2010new} to be fixed-parameter tractable w.r.t. the parameter ``number of quartets to reject from the input''.

The main result of this paper is to establish the NP-hardness of the \maxwqtet~problem.  In Section~\ref{sec:preliminaries}, we introduce preliminary notions, and 
in Section \ref{sec:hardness} we propose a reduction from the NP-hard \textsc{Cyclic Ordering} problem to \maxwqtet.  
It can be shown that this hardness result transfers to the \emph{rooted} setting, in which case we want to optimize \emph{triplets} (3-leaf rooted trees) rather than quartets.
In Section~\ref{sec:nonStructure}, we discuss  how being in a consensus setting, i.e.\ having weights based on a set of input trees on the same leaf set rather than arbitrary weights, does not necessarily make the problem easier, as one could expect:  We list some structural properties \new{that, surprisingly, are not satisfied by optimal solutions of a \maxwqtet~instance}. 
%} that one would imagine that a solution to a \maxwqtet~instance  satisfies in the consensus setting, and show that they do not hold.  
%\cel{For instance, we provide a counter-example to the following intuitive statement:
%if every input tree contains a quartet $ab|cd$, then some optimal solution contains $ab|cd$.}  
Nevertheless, in Section~\ref{sec:approx} we devise a factor $1/2$ approximation algorithm for \maxwqtet~running in time $O(k^2n^2 + kn^4 + n^5)$, where $k$ is the number of trees and $n$ the number of species (the best known randomized algorithm in the non-consensus setting is a factor $1/3$ one).
As a matter of fact, our algorithm includes a derandomization of this procedure, which had never been done before.
%Due to space constraints, some proofs are not included in this version.  A full version of this work can be found
Finally in Section~\ref{FPTAlgos},  we show that the known FPT algorithms for the CMQC problem can be extended to the consensus setting.
This yields an FPT algorithm that is efficient on instances in which there is not too much ambiguity, i.e.\ when few competing quartets on the same $4$ species appear with the same frequency.
We then conclude with some remarks and open problems related to the quartet consensus problem.

\section{Preliminaries}\label{sec:preliminaries}

An \emph{unrooted phylogenetic tree}  $T$ consists of vertices connected by edges, in which \new{every pair of nodes is} connected by exactly one path and no vertex is of degree two. The \emph{leaves} of a tree $T$, denoted by $L(T)$ \new{are} the set of vertices of degree one. Each leaf is associated to a label; the set of labels associated to the leaves of a tree $T$ is denoted by $\L(T)$.  We suppose that there is a bijection between $L(T)$ and  $\L(T)$.  Due to this bijectivity, we will refer to leaves and labels interchangeably.  We denote $|\L(T)|$ as $|T|$. \new{In the following, we will often omit the word ``phylogenetic'' and, unless otherwise stated, all trees are leaf-labeled}. A \emph{binary} unrooted tree has only vertices with degree three and vertices with degree one.  
A rooted (binary) phylogenetic tree is defined in the same way, except that it has exactly one node of degree two called the \emph{root}, denoted by $r(T)$.  Note that sometimes in the literature, rooted trees are seen as directed and such that all arcs are oriented away from the root. Unless stated otherwise, all trees
in this paper are unrooted.

 Given an unrooted phylogenetic tree $T$ and a subset $Y \subseteq \L(T)$, we denote by $T |Y$ the tree obtained from the minimal subgraph of $T$ connecting $Y$ when contracting degree-2 vertices.
A \emph{quadset} is a set of four labels.
For a quadset $\{a,b,c,d\}$,  there \new{are three non-isomorphic}\footnote{\new{Isomorphism preserving labels.}}  unrooted binary trees, called \emph{quartets}, which are denoted respectively by $ab|cd$, $ac|bd$, and $ad|bc$, depending on how the central edge splits the four labels. We say that an unrooted tree $T$ \emph{displays} the quartet $ab|cd$ if $T|\{a,b,c,d\}$ is  $ab|cd$. 
We denote the set of  quartets that an unrooted tree $T$ displays by $Q(T)$.  Note that if $T$ is binary, 
then $|Q(T)| = {|\L(T)| \choose 4}$. 
A set of quartets $Q$ on a set $L$ is said to be \emph{complete} if for each quadset $\{a,b,c,d\} \subseteq L$, there is in $Q$ \emph{exactly} one quartet among $ab|cd$, $ac|bd$, and $bc|ad$.

We are now ready to state our optimization problem. 
The \textsc{Weighted Quartet Consensus} problem asks for a tree that has as many quartets as possible in common with a given set of trees on the same set of labels $\leafset$:\\

\noindent \textbf{\textsc{Weighted Quartet Consensus} (\maxwqtet) problem} \\
\noindent {\bf Input}: a set of  
unrooted trees $\T = \{T_1, \ldots, T_k\}$ such that 
$\L(T_1) = \ldots = \L(T_k) = \leafset$.\\
\noindent {\bf Output}: a binary unrooted tree $M$ with $\L(M) = \leafset$ that maximizes 
$\sum_{T \in \T} |Q(M) \cap Q(T)|$. \\

\new{The problem is weighted as each quartet on $\leafset$ is weighted by frequency in $\T$, see below. }

In this paper we will focus on the particular case where the input trees are all binary. In fact, proving the problem NP-hard for this restricted case implies NP-hardness of the general problem.
Note however that relaxing the requirement of the output $M$ to be binary leads to a 
different problem, as one needs to define how unresolved quartets in $M$ are weighted.

In the remainder of the article, we will sometimes consider 
%It will be convenient to speak in terms of a 
multi-sets of quartets,
that are sets in which the same quartet can appear multiple times.
Denote by $f_{\Q}(q)$ the number of times 
that a quartet $q$ appears in a multi-set $\Q$ (we may write $f(q)$ if $\Q$ is unambiguous).
We say that a tree $T$ \emph{contains} $k$ quartets of $\Q$ if 
there are distinct quartets $q_1, \ldots, q_p \in Q(T)$ such that 
$\sum_{i = 1}^p f(q_i) = k$.
The \textsc{Weighted Quartet Consensus} problem can be rephrased as follows: given 
trees $T_1, \ldots, T_k$, finding a tree $M$ that
contains a maximum number of quartets from $Q(T_1) \uplus Q(T_2) \uplus \ldots \uplus Q(T_k)$, 
\newml{where $\uplus$ denotes multiset union}.
We will implicitly work with the decision version of \maxwqtet, i.e.\ for a given integer $q$, 
is there a consensus tree $M$ containing at least $q$ quartets from $Q(T_1) \uplus Q(T_2) \uplus \ldots \uplus Q(T_k)$?

Given a quadset $\{a,b,c,d\}$, we say that $ab|cd$ is \emph{dominant} 
(w.r.t. $f$) if $f(ab|cd) \geq f(ac|bd)$ 
and $f(ab|cd) \geq f(ad|bc)$.  We say that $ab|cd$ is \emph{strictly dominant} 
if both inequalities are strict.

\section{NP-hardness of the \textsc{Weighted Quartet Consensus} problem\label{sec:hardness}}

%In this section, we \new{present} a reduction from the \textsc{Cyclic ordering} problem, which has 
%not been used very often in the literature, but fits the \maxwqtet~problem quite well.
In this section, we present a reduction from the \textsc{Cyclic ordering} problem.
\newml{This problem has been used in phylogenetics before in~\cite{jansson2001complexity} in the context 
of inferring rooted binary trees from rooted triplets that are not required to originate from 
a set of trees on the same leaf set.}

But beforehand, we need some additional notation.
A \emph{caterpillar} is an unrooted binary tree obtained by taking a path $P = p_1p_2 \ldots p_{r}$, 
then adding a leaf $\l_i$ adjacent to $p_i$ for each $1 \leq i \leq r$, 
then adding another leaf $\l_1'$ adjacent to $p_1$ and a leaf $\l_r'$ adjacent to $p_r$.  
The two leaves $\l_1'$ and $\l_r'$ inserted last are called the \emph{ends} of the caterpillar.
An \emph{augmented caterpillar} $T$ is a binary tree obtained by
taking a caterpillar, then replacing each leaf by a binary rooted tree (replacing a leaf $\l$ by a tree $T'$ means replacing $\l$ by $r(T')$).
If $T_1, T_2$ are the two trees replacing the ends of the caterpillar, 
then $T$ is called a $(T_1, T_2)$-augmented caterpillar.
Note that every binary tree is a $(T_1, T_2)$-augmented caterpillar for some $T_1, T_2$.
Let $T$ be a caterpillar with leaves $(\l_1,\l_2, \dots, \l_k)$ taken in the order in which we encounter them on the path between the two ends $l_1$ and $l_k$ (more precisely, traverse the $\l_1 - \l_k$ path, and each time an internal node is encountered, append its adjacent leaves to the sequence),  
and let $T_1, \ldots, T_k$ be rooted binary trees.
We denote by $(T_1 | T_2 | \ldots | T_k)$ %\marginpar{needed? ML: only at last paragraph of the proof} 
the $(T_1, T_k)$-augmented 
caterpillar obtained by replacing \new{$\l_i$} %$v_i$ 
by $r(T_i)$, $1 \leq i \leq k$.
This notation gives rise to a natural ordering of its subtrees, and we say that $T_i < T_j$  if $i < j$ (i.e.\ $T_i$ appears before $T_j$ in the given ordering of the caterpillar subtrees).  Note that by reversing such an ordering, we obtain the same unrooted tree.  However, in the proofs the given ordering will be important.
Also, since $T_1, T_2$, and $T_{k - 1}, T_k$ are interchangeable, 
we will simply say that these two pairs are incomparable.
If each $T_i$ consists of a single leaf $\l_i$ for $2 \leq i \leq k - 1$, then we 
may write $(T_1 | \l_2 | \ldots | \l_{k - 1}  | T_k)$, and 
$\l_i < \l_j$ in $T$ to indicate that $\l_i$ appears before $\l_j$ in the ordering.  

%[MOVE THIS TO APPENDIX??]
%Note that by removing an edge $e=\{u,v\}$ of a binary tree $T$, 
%we obtain two disjoint rooted subtrees $T_1$ and $T_2$, respectively rooted at $u$ and $v$.  Call $T_1$ and $T_2$ \emph{rooted subtrees} of $T$. 
%For $X \subseteq \L(T)$, the \emph{rooted subtree of $X$}, denoted $T[X]$, is the minimal (i.e.\ least number of vertices) rooted subtree  of $T$ such that $X \subseteq \L(T[X])$.  Note that 
%$T[X]$ may contain leaves other than $X$.

%\subsubsection*{Reduction from the \textsc{Cyclic Ordering} problem}

We are now ready to describe the \textsc{Cyclic Ordering} problem.
Let $L = (s_1, \ldots, s_n)$ be a linear ordering of a set $S$ of at least 3 elements, and let $(a, b, c)$ be an 
ordered triple, with $a,b,c \in S$.  We say that $L$ \emph{satisfies} $(a, b, c)$ if 
one of the following holds in $L$: $a < b < c, b < c < a$ or $c < a < b$.
If $\C$ is a set of ordered triples we say that $L$ satisfies $\C$ if 
it satisfies every element of $\C$.
Intuitively speaking, $L$ satisfies $(a, b, c)$ when, by turning $L$  into a directed cyclic order by attaching $s_n$
to $s_1$, one can go from $a$ to $b$, then to $c$ and then to $a$.
This leads to the following problem definition: \\

\noindent \textbf{\textsc{Cyclic Ordering} problem} \\
\noindent {\bf Input}: A set $S$ of $n$ elements and a set $\C$ of $m$ ordered triples $(a, b, c)$ of distinct members of $S$.\\
\noindent {\bf Question}: Does there exist a linear ordering $L = (s_1, \ldots, s_n)$ of $S$ that satisfies $\C$? \\

The \textsc{Cyclic Ordering} problem is NP-hard~\cite{galil1977cyclic}.  
In the rest of this section, we let $S$ and $\C$ be the input set and triples, 
respectively, of a \textsc{Cyclic Ordering} problem instance.  
We denote $n = |S|$ and $m = |\C|$.  We shall use the following simple yet useful characterization.

\begin{lemma}\label{lem:triples-to-pairs}
A linear ordering $L$ of $S$ satisfies $\C$ if and only if  
for each $(a, b, c) \in \C$, exactly two of the following relations hold in $L$:
$a < b, b < c, c < a$.
\end{lemma}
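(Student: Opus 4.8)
The statement to prove is purely about linear orderings and the three pairwise relations among $a,b,c$, so the plan is elementary case analysis.

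\textbf{Plan.} The key observation is that for three distinct elements $a,b,c$ appearing in a linear order $L$, exactly one of $a<b$, $b<a$ holds, and likewise for the other two pairs, so the triple of truth values of $(a<b,\ b<c,\ c<a)$ ranges over $\{0,1\}^3$ minus the two ``cyclic'' assignments that are impossible in a total order. Concretely, the six possible orderings of $\{a,b,c\}$ each determine which of the three relations $a<b$, $b<c$, $c<a$ hold. First I would simply enumerate: $a<b<c$ gives $(a<b,\ b<c)$ true and $c<a$ false — two hold; $b<c<a$ gives $b<c$ and $c<a$ true, $a<b$ false — two hold; $c<a<b$ gives $c<a$ and $a<b$ true, $b<c$ false — two hold. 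These are exactly the three orderings that, by definition, satisfy $(a,b,c)$, and in each case exactly two of the three relations hold. For the remaining three orderings $a<c<b$, $c<b<a$, $b<a<c$, the same enumeration shows that exactly one of the three relations holds (respectively $a<b$; $c<a$; $b<c$), and none of these orderings satisfies $(a,b,c)$.

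\textbf{Step structure.} The proof is then a two-line argument plus a $3\times 3$ table (or the equivalent prose). The forward direction: if $L$ satisfies $(a,b,c)$, then $L$ restricted to $\{a,b,c\}$ is one of $a<b<c$, $b<c<a$, $c<a<b$, and in each the table entry says exactly two of $a<b,\ b<c,\ c<a$ hold. The converse: since the count of relations that hold among $a<b,\ b<c,\ c<a$ is never $0$ and never $3$ (an assignment of $0$ would need $b<a$, $c<b$, $a<c$, a cycle, impossible in a total order; an assignment of $3$ would need $a<b$, $b<c$, $c<a$, also a cycle), the only remaining possibility for ``exactly two'' is one of the three good orderings, hence $L$ satisfies $(a,b,c)$. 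Finally, since this equivalence holds for each triple individually, it lifts to ``$L$ satisfies $\C$ iff the two-out-of-three condition holds for every $(a,b,c)\in\C$''.

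\textbf{Main obstacle.} Honestly there is no real obstacle — the only thing to be careful about is making the ``impossible'' cases explicit: one must note that neither all three nor none of $\{a<b,\ b<c,\ c<a\}$ can hold simultaneously, because each such configuration encodes the directed cycle $a\to b\to c\to a$ (or its reverse), which contradicts transitivity/antisymmetry of the linear order $L$. Once that is stated, ``exactly two hold'' is equivalent to ``not exactly one holds and not zero and not three hold'', i.e.\ to being in one of the three cyclic-good orderings, which is the definition of satisfying $(a,b,c)$. I would present this as a short paragraph with an explicit enumeration of the six orderings rather than a formal table, to keep it compact.
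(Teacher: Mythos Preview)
Your proposal is correct and takes essentially the same approach as the paper: both argue by direct enumeration of the six possible relative orderings of $\{a,b,c\}$, checking that the three ``good'' orderings $a<b<c$, $b<c<a$, $c<a<b$ yield exactly two of the relations while the three ``bad'' ones yield exactly one. Your additional remark that neither zero nor all three relations can hold (since either would force a cycle) is a nice observation but not needed once the six cases are listed, and the paper omits it.
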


\begin{proof}
$( \Rightarrow )$: let $L$ be a \new{linear} ordering satisfying $\C$, and let $(a, b, c) \in \C$.  Thus in $L$, one of $a < b < c, b < c < a$ or $c < a < b$ holds.  
It is straightforward to verify that in each case, exactly two of $a < b, b < c, c < a$
hold.\\
$( \Leftarrow )$: suppose that $L$ does not satisfy $\C$.
Then there is some $(a, b, c) \in \C$ such that 
one of $a < c < b, b < a < c$ or $c < b < a$ does not hold.  Again, one can easily verify that each of these cases satisfies only one of $a < b, b < c$ and $c < a$.
\end{proof}

Now, from $S$ and $\C$ we construct a set of unrooted binary trees $\T$ on the same set of labels 
(we will omit the straightforward verification that this construction can be carried out in polynomial time).
% for the 
%\textsc{Weighted Quartet Consensus} problem.
First let $W$ and $Z$ be two rooted binary trees each on $(nm)^{100}$ leaves
(the topology is arbitrary, and the $100$ exponent could be optimized).  Our trees are defined on the leaf set
\newml{$\leafset = S \cup \L(W) \cup \L(Z)$ (note that $S, \L(W), \L(Z)$ are disjoint).} 
%\new{$\leafset = S \uplus \L(W) \uplus \L(Z)$}.
Let $C \in \C$ with $C = (a,b,c)$.
We construct $6$ trees from $C$, that is\new{,} $3$ pairs of trees:

\begin{itemize}
\item
The ``$a < b$'' trees:
let $(s_1, \ldots, s_{n - 2})$ be an arbitrary ordering of 
$S \setminus \{a, b\}$.  Then we build the trees 
$\treeord{C}{ab} = (W | a | b | s_1 | s_2 | \ldots | s_{n - 2} | Z)$ 
and 
$\treerev{C}{ab} = (W |  s_{n - 2} | s_{n - 3} | \ldots | s_1 | a | b | Z)$.

\item
The ``$b < c$'' trees:
let $(\hat{s}_1, \ldots, \hat{s}_{n - 2})$ be an arbitrary ordering of 
%\marginpar{diff notation pls.  ML: WHICH??? }
$S \setminus \{b, c\}$.  Then we build the trees 
$\treeord{C}{bc} = (W | b | c | \hat{s}_1 | \hat{s}_2 | \ldots | \hat{s}_{n - 2} | Z)$ 
and 
$\treerev{C}{bc} = (W |  \hat{s}_{n - 2} | \hat{s}_{n - 3} | \ldots | \hat{s}_1 | b | c | Z)$.

\item
The ``$c < a$'' trees:
let $(\bar{s}_1, \ldots, \bar{s}_{n - 2})$ be an arbitrary ordering of 
$S \setminus \{c, a\}$.  Then we build the trees 
$\treeord{C}{ca} = (W | c | a | \bar{s}_1 | \bar{s}_2 | \ldots | \bar{s}_{n - 2} | Z)$ 
and 
$\treerev{C}{ca} = (W |  \bar{s}_{n - 2} | \bar{s}_{n - 3} | \ldots | \bar{s}_1 | c | a | Z)$.

\end{itemize}

Denote by $\T(C)$ the set of $6$ constructed trees for $C \in \C$.
In this section, the input for our \textsc{Weighted Quartet Consensus} instance constructed from 
$S$ and $\C$ is $\T = \bigcup_{C \in \C}\T(C)$.
Note that $|\T| = 6m$.

Observe that each tree of $\T(C)$ is a $(W, Z)$-augmented caterpillar.
Moreover, note that the \new{majority of ordered pairs are ``balanced''} in the pairs of constructed trees: 
%More precisely, 
Let $a, b \in S$ and $x, y \in S \setminus \{a, b\}$, and let $\{\treeord{C}{ab}, \treerev{C}{ab}\}$ be an ``$a < b$'' tree-pair.  Then we have $x < y$ in $\treeord{C}{ab}$ if and only if $y < x$ in $\treerev{C}{ab}$.  Similarly for any $x \in S \setminus \{a, b\}$, 
$a < x, b < x$ in $\treeord{C}{ab}$ but $x < a, x < b$ in $\treerev{C}{ab}$.
Only $a < b$ holds in both trees.

Let $T \in \T$, and let $B(T)$ denote the set of quartets of $T$ that have 
at least two members of $\L(W)$, or at least two members of $\L(Z)$.
Thus $B(T)$ consists in all the quartets of the form $w_1w_2|xy$ and 
$z_1z_2|xy$ of $T$, where $w_1, w_2 \in \L(W), z_1, z_2 \in \L(Z)$ and
$x, y \in \leafset$ (note that no quartet of $B(T)$ has the form $w_1x|yw_2$ for $x,y \notin \L(W)$, nor the form $z_1x|yz_2$ for $x,y \notin \L(Z)$).
%Let $K'$ be the number of distinct\marginpar{?} quartets 
%of a tree $T \in \T$ containing at least two elements of $\L(W)$, 
%or at least $2$ elements of $\L(Z)$.
Note that for any tree $T' \in \T$, $B(T) = B(T')$.
Let $K := 6m|B(T)|$ be the total number of such quartets in $\T$, i.e.\
$K$ is the size of $\biguplus_{T \in \T} B(T)$.
We observe the following:

\begin{remark}
Any $(W, Z)$-augmented caterpillar on $\leafset$ contains
\newml{the $K$ quartets $\biguplus_{T \in \T} B(T)$}.\label{rem:k}
\end{remark}

\noindent
Now, denote $\O := 3m|W||Z|\left(  \binom{n - 2}{2} + 2(n - 2) \right)$.
Let $T \in \T$ and suppose that $T$ is an ``$a<b$'' tree, for some $a,b \in S$.  
For $w \in \L(W)$ and $z \in \L(Z)$, $x,y \in S$,  
a quartet $wx|yz$ \new{displayed by} $T$ is called an \emph{out-quartet} of $T$ if 
$\{x, y\} \neq \{a, b\}$, and an \emph{in-quartet} of $T$ if 
$x = a$ and $y = b$ (note that $x = b$ and $y = a$ is not possible, by construction).  Let $out(T)$ and $in(T)$ denote the set of out-quartets 
and in-quartets, respectively, of $T$.    
Note that each tree $T$ has $|W||Z|$ in-quartets and 
$|W||Z|\left(  \binom{n - 2}{2} + 2(n - 2) \right)$ out-quartets
(because there are $\binom{n - 2}{2} + 2(n - 2)$ ways 
to choose $\{x, y\} \neq \{a, b\}$).  Thus $\O$ is half the total number of out-quartets.  %We also have the following.

\begin{lemma}\label{lem:out-quartets}
Any weighted quartet consensus tree $M$ for $\T$ contains at most $\O$ quartets from 
$\biguplus_{T \in \T}out(T)$.  Moreover, if $M$ is a 
$(W, Z)$-augmented caterpillar $(W|s_1|\ldots|s_n|Z)$, where $S = \{s_1, \ldots, s_n\}$, then $M$ contains exactly $\O$ quartets 
from $\biguplus_{T \in \T}out(T)$.
\end{lemma}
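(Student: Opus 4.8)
The plan is to compute, quadset by quadset, the exact multiplicity of each quartet inside the multiset $\biguplus_{T \in \T}out(T)$, and then sum. Every out-quartet has exactly one label in $\L(W)$, exactly one in $\L(Z)$, and exactly two in $S$ (these sets being disjoint); hence a quartet of $M$ can have positive multiplicity in $\biguplus_{T\in\T}out(T)$ only if its quadset has the form $Q^{*}=\{w,x,y,z\}$ with $w\in\L(W)$, $z\in\L(Z)$, $x,y\in S$, $x\neq y$. Since $M$ is binary it displays exactly one quartet on $Q^{*}$, so the number of quartets of $\biguplus_{T\in\T}out(T)$ contained in $M$ equals the sum, over all quadsets $Q^{*}$ of this form, of the multiplicity in $\biguplus_{T\in\T}out(T)$ of the unique quartet $M$ displays on $Q^{*}$.

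First I would pin down these local multiplicities. Fix $Q^{*}=\{w,x,y,z\}$ as above and fix $C=(a,b,c)\in\C$; the six trees of $\T(C)$ form three pairs, which I label by the $S$-pairs $\{a,b\},\{b,c\},\{c,a\}$ (these are distinct since $a,b,c$ are). As each tree of $\T(C)$ is a $(W,Z)$-augmented caterpillar with $w$ inside the end-subtree $W$ and $z$ inside the end-subtree $Z$, the $w$--$z$ path passes the backbone attachment points of both pendant leaves $x$ and $y$; hence the tree displays $wx|yz$ or $wy|xz$ on $Q^{*}$, never $wz|xy$. Consider the pair labelled $\{p,q\}$: if $\{x,y\}=\{p,q\}$ then both trees of the pair display the same quartet on $Q^{*}$, which is an \emph{in}-quartet, so this pair contributes no out-quartet on $Q^{*}$; if $\{x,y\}\neq\{p,q\}$, the balancing property of the construction observed above (two distinct elements of $S$ not forming the label of a tree-pair are ordered oppositely by the two trees of that pair) gives that one tree displays $wx|yz$ and the other $wy|xz$, both out-quartets. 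Summing over the three pairs and over all $C$, and letting $g(\{x,y\})$ be $3m$ minus the number of triples $C=(a,b,c)\in\C$ for which $\{x,y\}$ is one of $\{a,b\},\{b,c\},\{c,a\}$, I get that $\biguplus_{T\in\T}out(T)$ contains exactly $g(\{x,y\})$ copies of $wx|yz$, exactly $g(\{x,y\})$ copies of $wy|xz$, and no copy of $wz|xy$; this count does not depend on $w$ or $z$.

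Consequently, for any binary tree $M$ on $\leafset$, its contribution on $Q^{*}$ is at most $\max\{g(\{x,y\}),g(\{x,y\}),0\}=g(\{x,y\})$, and summing over quadsets, the number of quartets of $\biguplus_{T\in\T}out(T)$ contained in $M$ is at most
\[
\sum_{w\in\L(W)}\sum_{z\in\L(Z)}\sum_{\{x,y\}\subseteq S} g(\{x,y\}) \;=\; |W||Z|\sum_{\{x,y\}\subseteq S} g(\{x,y\}),
\]
the last sum ranging over $2$-element subsets of $S$. Double counting pairs against triples, $\sum_{\{x,y\}\subseteq S}\bigl(3m-g(\{x,y\})\bigr)=\sum_{C\in\C}3=3m$, hence $\sum_{\{x,y\}\subseteq S} g(\{x,y\})=3m\binom{n}{2}-3m=3m\bigl(\binom{n-2}{2}+2(n-2)\bigr)$, so the bound equals $\O$. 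This is the first statement. For the ``moreover'' part, if $M=(W|s_1|\ldots|s_n|Z)$ then the quadsets of the relevant form are precisely the $\{w,s_i,s_j,z\}$ with $i<j$, on each of which $M$ displays $ws_i|s_jz$, whose multiplicity in $\biguplus_{T\in\T}out(T)$ is exactly $g(\{s_i,s_j\})$; so every inequality above becomes an equality and $M$ contains exactly $\O$ such quartets.

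I expect the only real difficulty to be the bookkeeping in the local step: keeping straight that an out-quartet has a fixed pattern of labels across $\L(W),\L(Z),S$ (so only one shape of quadset matters), that within each tree-pair the two trees are exact mirror images on every $S$-pair except the one labelling the pair (this is exactly where the cyclic-ordering gadget enters), and that---because $M$ is binary---$M$ picks up the multiplicity of a single quartet per quadset, so the per-quadset optimum $g(\{x,y\})$ is both an upper bound and, independently of the ordering of $S$ inside $M$, attained by every $(W,Z)$-augmented caterpillar of the stated form.
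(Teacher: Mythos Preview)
Your proof is correct, and the core observation you rely on --- that within each tree-pair the two trees display opposite quartets $wx|yz$ and $wy|xz$ on every quadset $\{w,x,y,z\}$ whose $S$-part is not the pair's label --- is exactly the mechanism the paper uses. The organization differs, however. The paper argues \emph{per tree-pair}: for each pair $\{T_C(ab),\overleftarrow{T}_C(ab)\}$ and each relevant quadset, the two out-quartets contributed are $wx|yz$ and $wy|xz$; since $M$ can display at most one of these, $M$ contains at most half of $out(T_C(ab))\uplus out(\overleftarrow{T}_C(ab))$, and summing over the $3m$ pairs gives $\O$ immediately (with equality for $M=(W|s_1|\ldots|s_n|Z)$ because such an $M$ displays exactly one member of each complementary pair). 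You instead argue \emph{per quadset}: you aggregate across all trees to obtain the multiplicity $g(\{x,y\})$ of each quartet, then sum via the double-counting identity $\sum_{\{x,y\}}(3m-g(\{x,y\}))=3m$. Both decompositions are valid and rest on the same balancing property; the paper's pairing argument is a bit slicker in that it gets ``at most half, hence $\O$'' without introducing $g$ or the identity, while your version makes the per-quadset multiplicities explicit.
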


\begin{proof}
Let $w \in \L(W)$ and $z \in \L(Z)$.
Let $\{\treeord{C}{ab}, \treerev{C}{ab}\}$ be an ``$a<b$'' tree-pair of $\T$, for some $a,b \in S$, and let $x,y \in S$ such that $\{x, y\} \neq \{a, b\}$.
Because $x < y$ in $\treeord{C}{ab}$ if and only if 
$y < x$ in $\treerev{C}{ab}$, we get that the out-quartet
$wx|yz$ is in $\treeord{C}{ab}$ if and only if $wy|xz$ is in $\treerev{C}{ab}$.
Since $M$ can only contain one of the two quartets, 
it follows that $M$ can contain at most half of the quartets from $out(\treeord{C}{ab}) \uplus out(\treerev{C}{ab})$.
Thus $M$ contains at most half the quartets from $\biguplus_{T \in \T} out(T)$, 
which is $3m |W||Z|\left(  \binom{n - 2}{2} + 2(n - 2) \right) = \O$. 
As for the second assertion, 
if $M = (W|s_1| \ldots |s_n|Z)$ then $M$ contains one of
$wx|yz$ or $wy|xz$ for each $x, y \in S$.  Thus if $M$ does not contain 
the \new{out-quartet} $wx|yz$ from $\treeord{C}{ab}$, then it contains the \new{out-quartet} $wy|xz$
from $\treerev{C}{ab}$.  We deduce that $M$ contains at least half the quartets
from $out(\treeord{C}{ab}) \uplus out(\treerev{C}{ab})$, and thus half the quartets from 
$\biguplus_{T \in \T} out(T)$.
\end{proof}

What follows is a key Lemma.  The proof is not so straightforward and can be found in Appendix \ref{app:proof-lem:all-wz-augm}.

\begin{lemma}\label{lem:all-wz-augm}
Any optimal consensus tree for $\T$
is a $(W, Z)$-augmented caterpillar.
\end{lemma}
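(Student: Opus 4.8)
The plan is to show that any consensus tree $M$ that is not a $(W,Z)$-augmented caterpillar can be strictly improved, forcing every optimal $M$ into this shape. The accounting already set up is the right framework: the quartets of $\biguplus_{T\in\T}Q(T)$ split into three groups — the ``$B$'' quartets (two leaves in $\L(W)$ or two in $\L(Z)$), the out-quartets, and the in-quartets — and there is a small leftover class of ``pure'' quartets with all four leaves in $S$, or with exactly one leaf in $\L(W)$ or $\L(Z)$ and three in $S$, etc. By Remark~\ref{rem:k} the $B$ quartets contribute exactly $K$ to every $(W,Z)$-augmented caterpillar, and by Lemma~\ref{lem:out-quartets} the out-quartets contribute at most $\O$ to any $M$, with equality attained whenever $M$ is a caterpillar of the form $(W|s_1|\ldots|s_n|Z)$. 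The key point I would exploit is that $W$ and $Z$ have $(nm)^{100}$ leaves each, so the number of $B$ quartets dominates every other category by a polynomial factor: $|B(T)|$ is on the order of $(nm)^{200}\cdot n$, whereas the out-quartets, in-quartets, and all quartets involving at most one $W$-leaf and at most one $Z$-leaf number only $O((nm)^{100}\cdot \mathrm{poly}(n,m))$ or less. Hence losing even a single $B$ quartet cannot be compensated by any gain elsewhere.

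Concretely, I would argue as follows. First, suppose $M$ is an optimal consensus tree that is \emph{not} a $(W,Z)$-augmented caterpillar. I want to produce a $(W,Z)$-augmented caterpillar $M'$ with strictly more quartets from $\biguplus_{T\in\T}Q(T)$, contradicting optimality. The natural candidate is to take $M'=(W|s_1|\ldots|s_n|Z)$ for a suitably chosen ordering of $S$ — in fact, since we only need to beat $M$ here (not to solve \textsc{Cyclic Ordering}), any fixed ordering will do. The comparison then runs category by category: (i) on $B$ quartets, $M'$ scores the full $K$ while $M$ scores at most $K$, and I claim $M$ scores strictly less than $K$ unless $W$ and $Z$ each induce a connected subtree of $M$ with all their leaves ``grouped'' — i.e.\ unless $M$ already looks like it has $W$ and $Z$ hanging off as rooted subtrees. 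This is the crux. (ii) On out-quartets, $M'$ scores exactly $\O$ by Lemma~\ref{lem:out-quartets}, and $M$ scores at most $\O$. (iii) On in-quartets plus all remaining ``pure-$S$'' quartets, both $M$ and $M'$ score at most the total count of such quartets, which is bounded by $6m|W||Z| + 6m\cdot O(n\,|W||Z| + n^4 + \dots)$ — crucially a quantity that is $O((nm)^{100}\cdot \mathrm{poly})$, asymptotically negligible against a single $B$-quartet deficit, whose magnitude is $\Omega((nm)^{200})$.

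So the real work is step (i): quantifying how many $B$ quartets $M$ must lose if it is not a $(W,Z)$-augmented caterpillar. The clean way to phrase this: $M$ contains all $K$ of the $B$ quartets if and only if, for every tree $T\in\T$, $Q(M)$ contains $B(T)$; since $B(T)=B(T')$ for all $T,T'$ and equals the set of quartets $w_1w_2|xy$ and $z_1z_2|xy$ with $x,y\in\leafset$, this says $M$ must display $w_1w_2|xy$ whenever $\{x,y\}\cap\L(W)=\emptyset$ (and symmetrically for $Z$). A standard phylogenetics fact is that a binary tree displays $w_1w_2|xy$ for \emph{all} such choices precisely when $\L(W)$ forms a ``clan'' / the leaf set of a pendant subtree in $M$ — equivalently, there is an edge of $M$ whose removal isolates exactly $\L(W)$, and likewise for $\L(Z)$. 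Once $\L(W)$ and $\L(Z)$ are each the leaf set of a pendant subtree, and since $|W|,|Z|$ are huge while $|S|$ is comparatively tiny, the path in $M$ between these two pendant subtrees, together with how the elements of $S$ attach to it, forces $M$ to be exactly a $(W,Z)$-augmented caterpillar — because each internal node on that path has degree $3$, so it carries exactly one pendant element (or pendant subtree) of $S$, which is the caterpillar shape. Conversely if $\L(W)$ is \emph{not} a pendant leaf set, some $w_1w_2|xy$ fails, and since for that fixed $\{w_1,w_2\}$ one loses a $B$ quartet in every one of the $6m$ trees, the deficit is at least $6m\ge 1$ — but I would actually want the quantitatively strong statement that being ``far'' from caterpillar shape costs $\Omega((nm)^{200})$ quartets, which follows because a single misplaced $w_1$ relative to the bipartition $(\L(W),\text{rest})$ already ruins $w_1w_2|xy$ for all $w_2$ on the wrong side and all valid $x,y$, a count polynomial in $|W|$. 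Either way, the loss in category (i) strictly exceeds the at-most-negligible total of category (iii), while category (ii) can only hurt $M$ further. The main obstacle, then, is formalizing precisely the ``pendant subtree $\Rightarrow$ augmented caterpillar'' structural implication and pinning down the quartet-loss lower bound cleanly enough that the polynomial gap argument goes through without case analysis; this is presumably why the authors defer it to Appendix~\ref{app:proof-lem:all-wz-augm}.
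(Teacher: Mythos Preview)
Your accounting has a fatal arithmetic slip. You claim the in-quartets plus leftovers total $O((nm)^{100}\cdot\mathrm{poly})$ and are therefore negligible against a $B$-quartet deficit of $\Omega((nm)^{200})$. But $|W|\cdot|Z|=(nm)^{200}$, so the in-quartet total $6m|W||Z|$ you wrote down is already $6m(nm)^{200}$ --- the \emph{same} order as your claimed deficit, not asymptotically smaller. Concretely: your arbitrarily-ordered $M'=(W|s_1|\ldots|s_n|Z)$ may score only $2m|W||Z|$ in-quartets (one satisfied relation per cyclic triple), while $M$ could score up to $4m|W||Z|$, a gap of $2m|W||Z|$ that is not negligible, so ``any fixed ordering will do'' does not survive. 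Your strategy can in fact be repaired, since the $B$-deficit when $\L(W)$ is not pendant is at least $6m(|W|-1)(|Z|-1)>2m|W||Z|$; but establishing that lower bound is precisely the non-obvious step (the paper needs Proposition~\ref{prop:setcards}) and is not covered by the hand-wave ``a count polynomial in $|W|$''. You also miss a case: if $\L(W)$ and $\L(Z)$ \emph{are} pendant leaf sets but the induced subtrees have the wrong internal topology, then $M$ is still not a $(W,Z)$-augmented caterpillar yet may miss as few as $6m$ of the $B$-quartets (one wrong internal $w_1w_2|w_3w_4$), so your ``$\L(W)$ pendant $\Leftrightarrow$ contains all $B$-quartets $\Leftrightarrow$ $(W,Z)$-augmented caterpillar'' chain is false at both steps, and comparison to an arbitrary $M'$ fails outright there.

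The paper's route avoids both problems. After dispatching the wrong-internal-topology case by simply swapping in the correct $W,Z$ subtrees, it handles the remaining case not by comparing to a fixed $M'$ but by building $M^*$ \emph{from} $M$: pick $w^*=\arg\max_{w}\rho(M,w)$, where $\rho(M,w)$ counts the input quartets $wx|yz$ with $x,y\in S$ displayed by $M$, delete all other $W$-leaves, graft the full tree $W$ at $w^*$, and do the same for $Z$. By the choice of $w^*,z^*$ this guarantees $\rho(M^*)\ge\rho(M)$, so $M^*$ is at least as good as $M$ on all in- and out-quartets simultaneously, and the comparison reduces cleanly to $B$-quartets (gain $\ge 6m(|W|-1)(|Z|-1)$) versus quartets with exactly one leaf in $\L(W)\cup\L(Z)$ (loss $\le 6m(|W|+|Z|)n^3$).
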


We finally arrive at our main result.

\begin{theorem}
The \textsc{Weighted Quartet Consensus} problem is NP-hard.
\end{theorem}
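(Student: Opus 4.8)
The plan is to show that the \textsc{Cyclic Ordering} instance $(S,\C)$ is a yes-instance if and only if the constructed \maxwqtet~instance $\T$ admits a consensus tree containing at least a certain threshold number of quartets, which I will denote $q^\star$. The threshold should be set to $q^\star = K + \O + W_{\mathrm{small}} + 2m\cdot|W||Z|$, where $K$ is the count of $B$-quartets (common to every $(W,Z)$-augmented caterpillar by Remark~\ref{rem:k}), $\O$ is the maximum number of out-quartets achievable (Lemma~\ref{lem:out-quartets}), $W_{\mathrm{small}}$ accounts for the quartets entirely inside the small subtrees $W$, $Z$, or on $S$ alone (which are also fixed across augmented caterpillars with the right internal topologies, or can be bounded uniformly), and $2m\cdot|W||Z|$ is the contribution from in-quartets: the claim will be that a consensus tree picks up the in-quartets of exactly $2$ out of every $3$ trees in each triple $\T(C)$ precisely when the induced linear order on $S$ satisfies $C$.

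First I would invoke Lemma~\ref{lem:all-wz-augm} to restrict attention to consensus trees $M$ that are $(W,Z)$-augmented caterpillars; by Remark~\ref{rem:k} any such $M$ automatically contains all $K$ quartets of $\biguplus_T B(T)$, and by the second part of Lemma~\ref{lem:out-quartets} if $M$ has the form $(W|s_1|\dots|s_n|Z)$ it contains exactly $\O$ out-quartets (the first part says it can never do better). I also need to argue that internal structure inside $W$ and inside $Z$ cannot help: since every input tree displays the same $W$-internal and $Z$-internal quartets, and these are the topologies $W,Z$ themselves, $M$ should copy them verbatim, contributing the fixed amount $W_{\mathrm{small}}$; similarly the quartets with all four leaves in $S$ total $6m\binom{n}{4}$ over $\T$ but their contribution to any augmented caterpillar with leaf order $(s_1,\dots,s_n)$ is a constant independent of the order, so they too fold into $W_{\mathrm{small}}$. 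Thus the only term in $\sum_T|Q(M)\cap Q(T)|$ that depends on the chosen linear order of $S$ is the in-quartet contribution.

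The heart of the argument is then the in-quartet bookkeeping. Fix a triple $C=(a,b,c)$ and consider its three tree-pairs. A tree $\treeord{C}{ab}$ has $|W||Z|$ in-quartets, each of the form $wa|bz$ with $w\in\L(W),z\in\L(Z)$; a consensus caterpillar $M=(W|s_1|\dots|s_n|Z)$ contains the quartet $wa|bz$ iff $a$ appears before $b$ in the order $(s_1,\dots,s_n)$ — and by construction $\treerev{C}{ab}$ has exactly the same in-quartets, so $M$ gains the full $2|W||Z|$ in-quartets of this pair iff $a<b$ in $M$, and $0$ otherwise (it is all-or-nothing because all $|W||Z|$ in-quartets of the pair agree on the $a,b$ split). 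Summing over the three pairs of $\T(C)$, $M$ collects $2|W||Z|$ times the number of relations among $\{a<b,\ b<c,\ c<a\}$ that hold in $M$. By Lemma~\ref{lem:triples-to-pairs}, the linear order underlying $M$ satisfies $C$ iff exactly two of these three relations hold; and at most two can ever hold simultaneously (they cannot all hold, as that would be a cyclic contradiction in a linear order). Hence the in-quartet contribution of $\T(C)$ to $M$ is at most $4|W||Z|$, with equality iff $M$'s order satisfies $C$. Summing over all $C\in\C$ gives total in-quartet contribution at most $4m|W||Z|$ wait — I must recheck: $2$ relations $\times 2|W||Z| = 4|W||Z|$ per triple, so the threshold bonus over the "baseline one relation" is what distinguishes satisfaction; setting $q^\star$ using the $4m|W||Z|$ value (not $2m|W||Z|$ as I wrote above — I would fix that constant in the final write-up) makes the equivalence exact: $M$ reaches $q^\star$ iff every triple gets its full in-quartet quota iff the linear order of $S$ read off $M$ satisfies every triple in $\C$.

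The main obstacle I anticipate is not any single clever step but rather the careful verification that the ``order-independent'' terms really are order-independent and that the in-quartet gain per tree-pair is genuinely all-or-nothing. Concretely: (i) one must check that out-quartets from the three different pairs of a single $\T(C)$, and from different triples, do not interfere — i.e.\ that the ``at most half'' bound of Lemma~\ref{lem:out-quartets} is tight simultaneously for all pairs when $M$ is a clean caterpillar $(W|s_1|\dots|s_n|Z)$, which the lemma already delivers; (ii) one must confirm no quartet is double-counted across the categories $B(T)$, $in(T)$, $out(T)$, and the all-in-$S$/all-in-$W$/all-in-$Z$ quartets — a partition-of-$\binom{\leafset}{4}$ check; and (iii) the converse direction needs Lemma~\ref{lem:all-wz-augm} to rule out that some weird non-augmented-caterpillar $M$ sneaks past $q^\star$, which is exactly why that lemma (whose proof is deferred to the appendix) is flagged as ``key''. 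Granting these three, the reduction is polynomial by the earlier remark, and NP-hardness of \textsc{Cyclic Ordering}~\cite{galil1977cyclic} transfers to \maxwqtet.
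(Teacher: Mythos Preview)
Your overall framework matches the paper's: threshold $K + \O + 4m|W||Z|$, Lemma~\ref{lem:all-wz-augm} to restrict to $(W,Z)$-augmented caterpillars, Lemma~\ref{lem:out-quartets} to pin down the out-quartet count, and the in-quartet bookkeeping via Lemma~\ref{lem:triples-to-pairs} is exactly right (including your self-correction to $4|W||Z|$ per triple). The genuine gap is your treatment of the ``small'' quartets.

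Your claim that the quartets with all four leaves in $S$ contribute ``a constant independent of the order'' is false. It is true that each tree-pair $\{\treeord{C}{ab}, \treerev{C}{ab}\}$ shares the same $S$-quartets (reversing a caterpillar preserves unrooted quartets), but across different pairs and different triples the arbitrary orderings of $S\setminus\{a,b\}$ were fixed independently, so different consensus orderings $(s_1,\dots,s_n)$ capture genuinely different numbers of these $S$-quartets. You also omit entirely the quartets with exactly one leaf in $\L(W)\cup\L(Z)$ and three leaves in $S$ (of the form $ws_i|s_js_k$ or $zs_i|s_js_k$); these are not in $B(T)$, are neither in- nor out-quartets, and are not purely in $S$. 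Finally, the quartets ``inside $W$'' or ``inside $Z$'' already lie in $B(T)$ and are counted in $K$, so folding them into a separate $W_{\mathrm{small}}$ double-counts them.

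The paper does not attempt to compute any of these small quartets exactly. Instead it bounds their total contribution crudely by $6m((|W|+|Z|)n^3 + n^4) < (|W|+|Z|)mn^5$, and then uses that this is dominated by the $2|W||Z|$ deficit in in-quartets that any non-satisfying order must incur --- which holds precisely because $|W|=|Z|=(nm)^{100}$ was chosen to make $|W||Z|$ overwhelm $(|W|+|Z|)mn^5$. Thus the threshold is simply $K + \O + 4m|W||Z|$ with \emph{no} $W_{\mathrm{small}}$ term: in the forward direction the constructed $M$ achieves at least this (small quartets can only add more); in the backward direction any optimal $M$ achieves at most $K + \O + (|W|+|Z|)mn^5 + (4m-2)|W||Z| < K + \O + 4m|W||Z|$. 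The enormous size of $W$ and $Z$ is not cosmetic --- it is exactly what makes this crude-bounding argument go through.
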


\begin{proof}
We show that 
there exists a linear ordering of $S$ satisfying $\C$ if and only if 
there exists a weighted quartet consensus tree $M$ for $\T$ that contains 
at least
$K + \O + 4m|W||Z|$
quartets from $\biguplus_{T \in \T}Q(T)$.
For the rest of the proof, 
we let $w \in \L(W)$ and $z \in \L(Z)$ be arbitrary leaves of $W$ and $Z$, 
respectively.

$( \Rightarrow )$: let $L = (s_1, s_2, \ldots, s_n)$ be a linear ordering of $S$ satisfying $\C$. Then we claim that the weighted quartet consensus tree $M = (W | s_1 | s_2 | \ldots | s_n | Z)$ contains the desired number of quartets.
Since $M$ is a  $(W, Z)$-augmented caterpillar, $M$ contains $K$ quartets of $\T$ that have 
two or more elements from $\L(W)$, or two or more elements from $\L(Z)$, see remark on page \pageref{rem:k}.
Moreover by Lemma~\ref{lem:out-quartets}, $M$ contains $\O$ 
\newml{quartets} from $\biguplus_{T \in \T} out(T)$.
As for the in-quartets, let $(a,b,c) \in \C$ and let $\T((a,b,c))$ be the set of $6$ trees corresponding to $(a,b,c)$.  
By Lemma~\ref{lem:triples-to-pairs}, $L$ satisfies two of the \new{relations}
 $a<b, b< c, c< a$ .  This implies that $M$ has exactly two of the 
following quartets: $wa|bz, wb|cz, wc|az$.  Since, for every $w \in \L(W)$ and $z \in \L(Z)$, 
each of these three quartets appears as an in-quartet in exactly two trees of $\T((a,b,c))$
\newml{(e.g. $wa|bz$ is an in-quartet of $\treeord{(a,b,c)}{ab}$ and $\treerev{(a,b,c)}{ab}$)}, it follows that 
$M$ contains $4|W||Z|$ quartets of $\biguplus_{T \in \T((a,b,c))} in(T)$. 
As this holds for every $(a,b,c) \in C$, $M$ contains $4m|W||Z|$ quartets of $\biguplus_{T \in \T} in(T)$.
Summing up, we get that $M$ has at least $K + \O + 4m|W||Z|$ quartets
from $\T$.

$( \Leftarrow )$: suppose that no linear ordering of $S$ satisfies $\C$.
Let $M$ be an optimal  consensus tree for $\T$.
\newml{By Lemma~\ref{lem:all-wz-augm}, we may assume that $M$ is 
 a $(W, Z)$-augmented caterpillar}.
We bound the number of quartets of $\T$ that can be contained in $M$.  

First, \newml{by Lemma~\ref{lem:all-wz-augm}}, $M$ contains $K$ quartets of $\T$ that have at least two elements of $\L(W)$ 
or at least two elements of $\L(Z)$.  As for the quartets with one or zero elements from $\L(W) \cup \L(Z)$, in any tree $T \in \T$ there are at most 
$(|W| + |Z|)n^3$ quartets of the form $wa|bc$ or $za|bc$ with $a,b,c \in S$, 
and at most $n^4$ quartets of the form $ab|cd$ with  $a,b,c,d \in S$.
Thus $M$ 
contains at most $6m((|W| + |Z|)n^3 + n^4) < (|W| + |Z|)mn^5$ 
quartets of $\T$ that are of the form $wa|bc, za|bc$ or $ab|cd$ 
with $a,b,c \in S$ (the inequality holds because $n \geq 3$ and 
$|W|=|Z|=(nm)^{100}$).  
Also, by Lemma~\ref{lem:out-quartets}, $M$ contains at most $\O$ 
quartets from $\biguplus_{T \in \T} out(T)$.  It remains to count the in-quartets.

% Let $C \in \C$ and let $T \in \{\treeord{C}{ab}, \treerev{C}{ab}\}$ be an ``$a<b$'' tree of $\T$, for some $a,b \in S$.
% A quartet $wx|yz$ of $T$ is called an \emph{out-quartet} of $T$ if 
% $\{x, y\} \neq \{a, b\}$, and an \emph{in-quartet} of $T$ if 
% $x = a$ and $y = b$ (note that $x = b$ and $y = a$ is not possible, by construction).  Let $out(T)$ and $in(T)$ denote the set of out-quartets 
% and in-quartets, respectively, of $T$.    
% Each tree $T$ has $|W||Z|$ in-quartets and 
% $|W||Z|\left(  \binom{n - 2}{2} + 2(n - 2) \right)$ out-quartets
% (the latter because there are $\binom{n - 2}{2} + 2(n - 2)$ ways 
% to choose $\{x, y\} \neq \{a, b\}$).

% Let $\{\treeord{C}{ab}, \treerev{C}{ab}\}$ be an ``$a<b$'' tree-pair of $\T$, for some $a,b \in S$, and let $x,y \in S$ such that $\{x, y\} \neq \{a, b\}$.
% Because $x < y$ in $\treeord{C}{ab}$ if and only if 
% $y < x$ in $\treerev{C}{ab}$, we get that the out-quartet
% $wx|yz$ is in $\treeord{C}{ab}$ if and only if $wy|xz$ is in $\treerev{C}{ab}$.
% It follows that $M$ can contain at most half of the quartets from $out(\treeord{C}{ab}) \uplus out(\treerev{C}{ab})$.
% Thus $M$ contains at most half the quartets from $\biguplus_{T \in \T} out(T)$, 
% which is $3m |W||Z|\left(  \binom{n - 2}{2} + 2(n - 2) \right) = I$. 

Let $(a,b,c) \in \C$. The following in-quartets appear, each twice, in 
$\T((a,b,c))$:  $wa|bz$, $wb|cz$, $wc|az$.  It is easy to check that these
three quartets are incompatible, 
\newml{i.e. no tree can contain all three of them,} 
and hence $M$ can have at most two of them.
We deduce that there must be at least two trees $T, \overleftarrow{T}$ of $\T((a,b,c))$ such that 
$M$ contains no quartet from $in(T) \uplus in(\overleftarrow{T})$.
Therefore $M$ contains at most $4|W||Z|$ quartets from 
$\biguplus_{T \in \T((a,b,c))} in(T)$,
and thus at most $4m|W||Z|$ quartets from $\biguplus_{T \in \T}in(T)$ 
\newml{assuming that the $4|W||Z|$ bound is attained for every $(a,b,c) \in \C$.  
We will however show that there must be some $(a,b,c) \in \C$ 
such that $M$ contains only $2|W||Z|$ of the quartets in $\biguplus_{T \in \T((a,b,c))} in(T)$}.

\newml{Now, since $M$ is 
 a $(W, Z)$-augmented caterpillar, we write
$M = (W | T_1 | T_2 | \ldots | T_k | Z)$.}
For some $a \in S$, let $T(a)$ be the tree of $\{T_1, \ldots, T_k\}$
that contains $a$ as a leaf.
Then a quartet $wa|bz$ is in $Q(M)$ if and only if 
$T(a) < T(b)$.  Let $L$ be a linear ordering of $S$ such that 
$T(a) < T(b) \Rightarrow a < b$ in $L$.
Since no linear ordering of $S$ can satisfy $\C$, by Lemma~\ref{lem:triples-to-pairs} there must 
be some $(a,b,c) \in \C$ such that 
%none of 
%$a< b<c$, $b<c<a$, $c<a<b$ holds in $L$.
%Then by Lemma~\ref{lem:triples-to-pairs},  
only one of $a<b, b<c, c<a$ holds in $L$.
This also means that at most one of $T(a) < T(b), T(b) < T(c), T(c) < T(a)$
holds (because $\neg (a < b) \Rightarrow \neg (T(a) < T(b))$).
%(because if two of these conditions held in $M$, they would also hold in $L$).  
Thus $M$ has at most one of the $wa|bz, wb|cz, wc|az$ quartets.
It follows that there are at least $2|W||Z|$ quartets from 
$\biguplus_{T \in \T((a,b,c))} in(T)$ that $M$ does not contain.
Therefore $M$ contains at most 
$4m|W||Z| - 2|W||Z|$ quartets of $\biguplus_{T \in \T} in(T)$.

In total, the number of quartets that $M$ contains from the input is bounded by $K + \O + (|W| + |Z|)mn^5 + (4m - 2)|W||Z| < K + \O + 4m|W||Z|$, by our choice of $|W|$ and $|Z|$.
\end{proof}

%\noindent The following result is implied by the previous theorem: 
%\begin{theorem}
%The \textsc{Weighted Quartet Consensus} problem is NP-hard.
%\end{theorem}

The implications of these results for the Weighted Triplet Consensus (WTC) problem are presented in Appendix \ref{sec:triplets}.
The same techniques can be used to show that WTC is NP-hard.

\section{The (non)-structure of \maxwqtet \label{sec:nonStructure}}

\newml{In the rest of this paper,  we aim at designing algorithms building on the fact that the weight of each quartet is not arbitrary, and is rather based on a set of input trees on the same leaf set.}
When designing optimized algorithms for a problem, understanding the relationship between the input and the optimal solution(s) can be of great help. 
In phylogenetics, several problems  are harder 
in the supertree setting, i.e.\ when the input trees do not all contain the same species, than in the consensus setting \newml{as in the \maxwqtet~problem}. An example is the problem of finding an unrooted phylogenetic tree containing \new{as minors} a set of unrooted phylogenetic trees -- the compatibility problem -- \new{which} is 
NP-hard in the supertree setting \cite{steel1992complexity} and polynomially solvable in the consensus setting \cite{aho1981inferring}.
Despite the NP-hardness of \maxwqtet, there may exist some properties inherent to the consensus setting that are useful for devising efficient FPT algorithm, or for establishing lower bounds on the value of an optimal solution in order to develop approximation algorithms.

\newml{In attempt to establish useful properties of the weights of quartets 
in the consensus setting, } 
%To do so, 
we initially conjectured that the following relationships between the input trees and the solution(s)  hold.  Despite being seemingly reasonable, we prove all these conjectures false. 

\begin{enumerate}
\item let $D$ be the set of strictly dominant quartets of the input multiset $\Q$, i.e.\ the quartets $ab|cd$ such that $f(ab|cd) >  f(ac|bd)$ and $f(ab|cd) >  f(ad|bc)$.  Then there is a constant $\alpha > 0$ such that there exists an optimal solution 
containing at least $\alpha|D|$ quartets of $D$;

\item if a quartet $ab|cd$ has a higher weight \new{than} the sum of the other quartets on the same quadset, i.e.\  $f(ab|cd) >  f(ac|bd) + f(ad|bc)$, then 
\newml{some optimal solution contains $ab|cd$};
%any optimal solution must contain $ab|cd$ (or as a relaxation, there is \emph{some} optimal solution that contains $ab|cd$);

\item more generally, there exists $\beta > 0$ such that if a quartet $ab|cd$ is in a fraction $\beta$ of the input trees, then $ab|cd$ must be in some optimal solution.  In particular, if $ab|cd$ is in \emph{every} input tree, 
%then any optimal solution must contain $ab|cd$
%(or as a relaxation, there is \emph{some} optimal solution that contains $ab|cd$);
\newml{then there is some optimal solution that contains $ab|cd$;}

\item if a quartet $ab|cd$ is in no input tree, 
then no optimal solution contains $ab|cd$.

\item call $ab|cd$
a \emph{strictly least-frequent quartet} if $f(ab|cd) < f(ac|bd)$ and $f(ab|cd) < f(ad|bc)$.
%Assuming it exists, let $T^*$ be an optimal solution that contains no strictly least-frequent quartet.  Then $T^*$ %is an optimal solution for \maxwqtet.
\newml{Suppose that there exists a tree $T^*$ on leaf set $\leafset$ that contains no strictly least-frequent quartet,
and choose such a $T^*$ that contains a maximum number of quartets from the input.  
Then $T^*$ is an optimal solution for \maxwqtet.}
\end{enumerate}

Unfortunately,  we answered negatively to all conjectures, see Appendix \ref{sec:nonStructureProofs}.  

\section{Approximatibility of \maxwqtet\label{sec:approx}}

In this section, we show that \maxwqtet~admits a factor $1/2$ approximation algorithm that runs in polynomial time.
Hereafter, the input set of trees is $\T = \{T_1, \ldots, T_k\}$ and we denote $\Q = Q(T_1) \uplus \ldots \uplus Q(T_k)$.
We say that a minimization (resp. maximization) problem $P$ can be approximated within a factor $\alpha > 1$ (resp. $\beta < 1$) if there is an algorithm that,
for every instance $I$ of $P$, 
runs in polynomial time and outputs a solution of value $APP(I)$ such that 
$APP(I) \leq \alpha OPT(I)$ (resp. $APP(I) \geq \beta OPT(I)$), where $OPT(I)$ is the optimal value of $I$.

%We start by mentioning that in~\cite{jiang2001polynomial}, the authors
%present a  polynomial-time approximation scheme
%(PTAS) for the complete Maximum Quartet Consistency (cMQC) problem.
%That is, given a set $Q$ of quartets in which exactly one quartet is present 
%for each set of $4$ taxa, find a subset $Q' \subseteq Q$ of maximum cardinality
%such that $Q'$ is compatible.  
%As the author mention, the PTAS can be extended to solve the 
As mentioned before, the Complete Maximum Quartet Compatibility \new{(}CMQC\new{)}  problem admits a PTAS, though it can only be applied to the
\maxwqtet~problem when the number of input trees is constant.  There does not seem to exist an easy extension of the 
PTAS algorithm for the case of an unbounded number of trees, 
which makes \maxwqtet~seem ``harder'' than CMQC.
Nevertheless, we give a simple factor $1/2$ approximation algorithm, which is better than the (randomized) factor $1/3$ approximation, the best known so far, for
the general Maximum Quartet Consistency problem in which the given quartet set is not necessarily complete.
\newml{We borrow ideas from~\cite{byrka2010new} to show that this can
be achieved by taking the best solution from either a $1/3$ approximation to \maxwqtet, 
or a factor $2$ approximation to WMQI, the minimization version of \maxwqtet~(see below).}
%As we shall see, our algorithm simply takes the best between the tree found by this factor $1/3$ approximation and the one found by a factor $2$ approximation for the WMQI problem - which is the \emph{minimization} version of \maxwqtet.
For two unrooted binary trees $T_1, T_2$ 
on leaf set $\leafset$, denote $d_Q(T_1, T_2) = |Q(T_1) \setminus Q(T_2)|$.  The WMQI problem is defined as follows:\\

\noindent \textbf{\textsc{Weighted Minimum Quartet Inconsistency} (WMQI) problem} \\
\noindent {\bf Input}: a set of 
unrooted trees $\T = \{T_1, \ldots, T_k\}$ such that 
$\L(T_1) = \ldots = \L(T_k) = \leafset$.\\
\noindent {\bf Output}: a tree $M$ with $\L(M) = \leafset$ that minimizes 
$\sum_{T \in \T} d_Q(M, T)$. \\

Note that the WMQI problem is equivalent to finding a minimum (in the multiset sense) number of quartets 
to discard from $\Q$ so that it is compatible.

It is not hard to show that $d_Q$ is a metric.  In particular, 
$d_Q$ satisfies the triangle inequality, i.e.\ for any $3$ trees
$T_1, T_2, T_3$ on the same leaf set, 
$d_Q(T_1, T_3) \leq d_Q(T_1, T_2) + d_Q(T_2, T_3)$.
This leads to a factor $2$ approximation algorithm for WMQI obtained by simply returning the best tree from the input.  Intuitively, the input tree that is the closest to the others cannot be too far from the best solution, which is a median tree in the metric space.  
See~\cite{bansal2011comparing} for details.

\begin{theorem}[\cite{bansal2011comparing}]\label{thm:wmqi-approx}
The following is a factor $2$ approximation algorithm for WMQI:
output the tree $T \in \T$ that minimizes 
$\sum_{T_i \in \T}d_Q(T, T_i)$.
\end{theorem}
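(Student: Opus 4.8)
The plan is to exploit the fact, already noted in the excerpt, that $d_Q$ is a metric on the set of binary trees on leaf set $\leafset$. Let $M^*$ be an optimal solution for the given WMQI instance, i.e.\ a tree minimizing $\sum_{T_i \in \T} d_Q(M^*, T_i) =: OPT$. Let $T \in \T$ be the input tree returned by the algorithm, namely the one minimizing $\sum_{T_i \in \T} d_Q(T, T_i)$ over all $T \in \T$; call this value $APP$. The goal is to show $APP \le 2\, OPT$.

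The key step is a standard averaging-plus-triangle-inequality argument. First, by the triangle inequality applied to each $T_i$ through the point $M^*$, for every fixed $T_j \in \T$ we have
\[
\sum_{T_i \in \T} d_Q(T_j, T_i) \le \sum_{T_i \in \T} \bigl( d_Q(T_j, M^*) + d_Q(M^*, T_i) \bigr) = k\, d_Q(T_j, M^*) + OPT .
\]
Since the algorithm picks the best $T \in \T$, we have $APP \le \sum_{T_i \in \T} d_Q(T_j, T_i)$ for the particular choice $T_j$ that minimizes $d_Q(T_j, M^*)$; and for that choice $d_Q(T_j, M^*) \le \tfrac{1}{k} \sum_{T_i \in \T} d_Q(T_i, M^*) = \tfrac{1}{k} OPT$ by averaging. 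Substituting gives $APP \le k \cdot \tfrac{1}{k} OPT + OPT = 2\, OPT$, as desired. Finally, the algorithm clearly runs in polynomial time: computing $d_Q(T, T_i)$ for a single pair costs $O(n^4)$ by comparing the induced quartets (or faster with known quartet-distance algorithms), and there are $O(k^2)$ pairs to consider.

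The only genuinely non-routine ingredient is the metric property of $d_Q$ — in particular that $d_Q$ is symmetric and satisfies the triangle inequality — but this is precisely what the excerpt grants us (citing~\cite{byrka2010new,bansal2011comparing}), so no obstacle remains; the rest is the textbook "best point in the set is within a factor $2$ of the $1$-median" observation. One subtlety worth stating explicitly is that $APP$ need not equal $OPT$ even though a WMQI instance always contains its own input trees as candidate solutions, because the optimal median $M^*$ may be a tree outside $\T$; this is exactly why the factor $2$ (and not $1$) appears.
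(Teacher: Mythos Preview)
Your proof is correct and is precisely the standard ``best input point is a 2-approximate 1-median'' argument that the paper alludes to. The paper itself does not spell out a proof of this theorem: it gives only the one-sentence intuition (``the input tree that is the closest to the others cannot be too far from the best solution, which is a median tree in the metric space'') and defers to~\cite{bansal2011comparing} for details, so your write-up is exactly the argument being cited.
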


In~\cite{bansal2011comparing}, the authors \new{explain} how to compute
$d_Q(T_1, T_2)$ in time $O(n^2)$.  Therefore the factor $2$ approximation can be implemented to run in time $O(k^2 n^2)$, 
by simply computing $d_Q$ between every pair of trees.

Theorem~\ref{thm:wmqi-approx} has a direct implication on the 
approximation guarantees of the ASTRAL algorithm in~\cite{mirarab2014astral},
an implementation of the work from Bryant and Steel~\cite{bryant2001constructing}.  
This algorithm finds, in polynomial time, an optimal solution $M$ for \new{a restricted version of  WMQI where}
%subject to the following condition:
every bipartition
%\footnote{a bipartition of a tree $T$ consists in the two leafsets of the two trees obtained by removing an edge of $T$}
 of $M$ is also a bipartition in at least one of the input trees.  The solution $T$ returned by the algorithm of Theorem~\ref{thm:wmqi-approx} above trivially satisfies this condition.
Thus, $M$ is at least as good as $T$, implying the following.

\begin{corollary}
The ASTRAL algorithm is a factor $2$ approximation for WMQI.
\end{corollary}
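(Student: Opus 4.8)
The plan is to show that the ASTRAL algorithm's solution $M$ is always at least as good, in the WMQI objective, as the tree returned by the factor $2$ approximation of Theorem~\ref{thm:wmqi-approx}. First I would recall precisely what ASTRAL computes: given $\T = \{T_1, \ldots, T_k\}$ on the common leaf set $\leafset$, it finds, in polynomial time, a tree $M$ minimizing $\sum_{T_i \in \T} d_Q(M, T_i)$ subject to the constraint that every bipartition induced by an edge of $M$ also occurs as a bipartition in at least one input tree $T_i$. This is exactly the Bryant--Steel constrained median problem, which is polynomial because the search space of ``allowed'' bipartitions is polynomially bounded.

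The key observation is that the candidate tree $T$ produced by Theorem~\ref{thm:wmqi-approx} --- namely the input tree $T \in \T$ minimizing $\sum_{T_i \in \T} d_Q(T, T_i)$ --- trivially satisfies the ASTRAL bipartition constraint: every bipartition of $T$ is a bipartition of $T$ itself, which is one of the input trees. Hence $T$ is a feasible solution for the constrained optimization problem that ASTRAL solves exactly, so the optimal constrained solution $M$ satisfies $\sum_{T_i \in \T} d_Q(M, T_i) \leq \sum_{T_i \in \T} d_Q(T, T_i)$. Combining this with the guarantee of Theorem~\ref{thm:wmqi-approx}, which states $\sum_{T_i \in \T} d_Q(T, T_i) \leq 2\, OPT$ where $OPT$ is the optimal (unconstrained) WMQI value, we get $\sum_{T_i \in \T} d_Q(M, T_i) \leq 2\, OPT$, so $M$ is a factor $2$ approximation for WMQI.

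There is no real obstacle here beyond being careful about two modeling points, and I would state them explicitly rather than grind through them. The first is making sure the WMQI objective and the ASTRAL/Bryant--Steel objective coincide as functions of the output tree, so that ``at least as good'' genuinely transfers; since both are $\sum_{T_i \in \T} d_Q(\cdot, T_i)$, this is immediate. The second is confirming that ASTRAL indeed solves its constrained problem to optimality in polynomial time --- this is precisely the content of~\cite{bryant2001constructing,mirarab2014astral} and may be cited. Given those, the argument is a two-line feasibility-plus-monotonicity observation, and the proof is complete.
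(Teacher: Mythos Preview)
Your proof is correct and follows exactly the same argument as the paper: the best input tree from Theorem~\ref{thm:wmqi-approx} is feasible for ASTRAL's bipartition constraint, so ASTRAL's optimal constrained solution is at least as good and therefore also a factor~$2$ approximation. The two additional modeling points you raise are valid but routine, and the paper simply leaves them implicit.
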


We do not know whether the factor $2$ is tight for the ASTRAL algorithm - we conjecture that ASTRAL can actually achieve a better approximation ratio.  As shown in the rest of this section, this would have interesting applications for the approximability of \maxwqtet.

Indeed, both \maxwqtet~and WMQI share the same set of optimal solutions, 
but the two problems are not necessarily identical in terms of approximability.  We show however that WMQI can be used to approximate \maxwqtet.  As stated earlier, there is a trivial factor $1/3$ randomized approximation
%\footnote{Here, the approximation is ``randomized'' in the sense that the output tree contains an \emph{expected} 
%number of quartets from the input that is at least a third from optimal.} 
for \maxwqtet: output a random tree $T$.
Each quartet of $\Q$ has a $1/3$ chance of being contained by 
$T$, and so the expected number of quartets of $\Q$ contained by $T$ 
is $|\Q|/3 = k{n \choose 4}/3$ (here $|\Q|$ denotes the multiset cardinality).  Call this the ``random-tree-algorithm''.
For the sake of having a \emph{deterministic} algorithm, we show the following:

\begin{lemma}\label{lem:can-derandomize}
The ``random-tree-algorithm'' can be derandomized, i.e.\
there is a deterministic algorithm that, in time $O(kn^4 + n^5)$, finds a tree containing at least $|\Q|/3$
quartets from $\Q$. 
\end{lemma}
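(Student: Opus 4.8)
The plan is to derandomize by the method of conditional expectations, but applied to a random \emph{caterpillar} rather than to a general random binary tree, since the conditional probabilities are then especially simple to evaluate. The starting observation is that a caterpillar built from a uniformly random ordering of $\leafset$ already meets the $1/3$ guarantee: for a quadset $Y=\{a,b,c,d\}$ the quartet that a caterpillar displays on $Y$ is determined by which two of $a,b,c,d$ occupy the two outer positions, and exactly $8$ of the $4!$ orderings of $Y$ yield each of the three quartets; hence a uniformly random caterpillar displays each quartet of $Y$ with probability exactly $1/3$, so the expected number of quartets of $\Q$ it displays is $|\Q|/3$. Since a uniformly random ordering of $\leafset=\{x_1,\dots,x_n\}$ is obtained by inserting $x_1,\dots,x_n$ one at a time, each into a uniformly random gap of the current order, I would derandomize this sequential construction: given the ordered caterpillar already built on $\{x_1,\dots,x_{i-1}\}$, insert $x_i$ into the gap maximizing the conditional expectation
\[
\Phi \;=\; \sum_{q\in\Q}\Pr[\text{the final caterpillar displays }q],
\]
the probability being over the remaining random insertions.

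The next step is to show $\Phi$ is cheap to evaluate, by decomposing it over quadsets. Fix a quadset $Y$ and let $t$ be the number of leaves of $Y$ currently placed. If $t\le 2$, then even conditioned on the known relative order of the placed leaves of $Y$, each of the three quartets of $Y$ still has probability exactly $1/3$ (rerun the same count of orderings of $Y$, restricted to orderings consistent with what is fixed), so $Y$ contributes $k/3$ to $\Phi$ regardless of the current caterpillar --- here one uses, as noted just before the lemma, that the three quartet frequencies on any quadset sum to $k$. If $t=3$, with the placed leaves of $Y$ appearing in relative order $p<q<r$ and $s$ the missing one, then in the final random ordering $s$ is equally likely to fall into each of the four gaps around $p,q,r$, so $Y$ contributes $\frac{1}{2}f(pq|rs)+\frac{1}{2}f(ps|qr)$ (the quartet $pr|qs$ has probability $0$). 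If $t=4$, the quartet of $Y$ is already fixed and $Y$ contributes its frequency. Thus $\Phi$ is just a sum over quadsets of quantities read off from the relative order of their placed leaves together with the quartet frequencies.

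Putting this together: by the law of total expectation, at each insertion some gap leaves $\Phi$ non-decreasing; since $\Phi$ equals $|\Q|/3$ before any insertion and equals the number of quartets of $\Q$ displayed by the final caterpillar once all leaves are placed, that caterpillar --- a valid binary output tree --- displays at least $|\Q|/3$ quartets. For the running time I would first build a table giving, for each quadset, its three quartet frequencies: with a linear-time preprocessing of each input tree each quartet-topology query costs $O(1)$, so this is $O(kn^4)$. Then I would maintain $\Phi$ incrementally. Inserting $x_i$ changes only the contributions of quadsets containing $x_i$, and among those only the ones with at least two further leaves already placed --- i.e. those becoming $3$- or $4$-placed, of which there are $O(i^2n)$ --- change at all; each candidate gap for $x_i$ can then be scored in $O(1)$ per such quadset. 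That is $O(i^3n)$ per insertion and $O(n^5)$ overall, for a total of $O(kn^4+n^5)$.

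The main obstacle is getting the conditional-probability bookkeeping exactly right --- in particular verifying that a quadset with at most two placed leaves contributes precisely $k/3$ independently of the current caterpillar, which is exactly what makes caterpillars preferable to derandomizing a uniformly random binary tree: in the latter case the conditional probabilities are governed by a three-colour P\'olya-urn process on the subtrees incident to the already-placed leaves of the quadset, which is considerably messier and does not obviously collapse to $O(1)$ statistics of the current tree. A minor technical point is maintaining the ranks of the placed leaves within the current order so that, for a candidate gap, one can decide in constant time where $x_i$ would sit relative to the placed leaves of a given quadset.
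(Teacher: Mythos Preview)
Your proposal is correct and achieves the same $O(kn^4+n^5)$ bound, but by a genuinely different route than the paper. The paper derandomizes a uniformly random \emph{binary tree} built top-down: it starts from a star, repeatedly splits a non-binary node $v$ into two children $x,y$, and reinserts the leaves below $v$ one by one under $x$ or $y$, each time picking the side that maximizes the conditional expectation; the required probabilities $\Pr[q\in Q(T')]$ are computed by enumerating at most $2^3$ placements of the remaining members of the quadset and restricting to a four-leaf tree. Your approach instead derandomizes a uniformly random \emph{caterpillar} by sequential insertion into a linear order. What this buys you is exactly what you identify: the conditional contribution of a quadset collapses to one of the values $k/3$, $\tfrac12 f(pq|rs)+\tfrac12 f(ps|qr)$, or a single frequency, with no case analysis beyond $t\in\{\le 2,3,4\}$; the paper's split-based scheme has to carry the more delicate ``which of $b,c,d$ are children of $v$, and which have been reinserted'' bookkeeping you allude to. What the paper's approach buys is that its output need not be a caterpillar, so in practice it may land on a better tree even though the worst-case guarantee is the same $|\Q|/3$. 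Your verification that a quadset with $t\le 2$ placed leaves still has each quartet probability exactly $1/3$ (using that $f(ab|cd)+f(ac|bd)+f(ad|bc)=k$ for binary inputs) is the one step that needs care, and it checks out: with two leaves fixed in relative order there are $12$ extensions of the quadset and each quartet is realized by exactly $4$ of them.
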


\begin{proof}
We derandomize the factor $1/3$ algorithm using the standard method of conditional expectation. 
For the simplicity of exposition, we will construct a rooted 
tree $T$ in a top-down manner ($T$ can be unrooted after the construction).
Call a rooted tree $T$ \emph{internally binary} if the only nodes of $T$ that have more than two 
children have only leaves as children.
We start with a fully unresolved internally binary tree $T$ on leaf set $\leafset$ (i.e.\ $T$ consists of a root
whose $n$ children are in bijection with $\leafset$).
We then iteratively split each unresolved node $v$ of $T$ into two subtrees so as to maximize the expected number of quartets that $T$ contains.
We stop when $T$ is a binary tree.

To describe the algorithm more precisely, suppose that $T$ is an internally binary tree on leaf set $\leafset$, and let $v$ be a node of $T$ with more than
$2$ children, say $\{v_1, \ldots, v_m\} \subseteq \leafset$
(if no such $v$ exists, then $T$ is binary and we can stop).
We split $v$ by first removing $\{v_1, \ldots, v_m\}$ from $T$, adding two children $x$ and $y$ to $v$, and reinserting
$v_1, \ldots, v_m$ one after another, each as either a child of $x$ or a child of $y$.  We describe how this choice is made.
Suppose that for $i \geq 1$, 
$v_1, \ldots, v_{i - 1}$ have been reinserted, resulting in the tree $T_{i - 1}$, and that we need to process $v_i$.  
Denote by $T_{i,x}$ 
(resp. $T_{i,y}$)
the tree obtained by inserting $v_i$ as a child of $x$ (resp. of $y$) in $T_{i - 1}$.
We then define a random binary tree $T'_{i,x}$ from $T_{i,x}$ as follows:
for each $v' \in \{v_{i + 1}, \ldots, v_m\}$, reinsert $v'$ as a child of $x$ with probability $1/2$, 
or as a child of $y$ with probability $1/2$.  
Then, replace each non-binary node $w$ with children $\leafset'$ by a rooted binary tree
on leaf set $\leafset'$ chosen uniformly at random.  
We define the random binary tree $T'_{i,y}$ from $T_{i,y}$ using the same process.  

For a random tree $T'$ obtained by the above process and for $q \in \Q$, 
let $I(q, T')$ be an indicator variable for whether $q \in Q(T')$.  That is, $I(q, T') = 1$ if $q \in Q(T')$, and $I(q, T') = 0$ otherwise. Let $I(T') = \sum_{q \in \Q}I(q, T')f_{\Q}(q)$
\footnote{Observe that here, $q \in \Q$ means that there exists at least one occurrence of $q$ in the multiset$\Q$, and so each quartet present in $\Q$ is considered once in the summation, independently of $f_{\Q}(q)$.}.
We seek 
\begin{align*}
\max_{T' \in \{T'_{i,x}, T'_{i,y}\} } \mathbb{E}\left[I(T')\right] &= \max_{T' \in \{T'_{i,x}, T'_{i,y}\}} \mathbb{E}\left[\sum_{q \in \Q}I(q, T')f_{\Q}(q)\right] \\
&=\max_{T' \in \{T'_{i,x}, T'_{i,y}\}} \sum_{q \in \Q} \Pr\left[q \in Q(T')\right]f_{\Q}(q)
\end{align*}

If $T'_{i,x}$ attains this maximum, we insert $v_i$ below $x$, and otherwise we insert $v_i$ below $y$.  After every child $v_i$ of $v$ has been inserted, we process the next non-binary node.  This concludes the algorithm description (we shall detail how to compute $\Pr[q \in Q(T')]$ below).

 If $T$ is an internally binary tree, by a slight abuse of 
 notation define $\mathbb{E}\left[I(T)\right] =\mathbb{E}\left[I(T')\right] $, 
 where $T'$ is the random binary tree obtained by replacing each 
 non-binary node of $T$ on leaf set $\leafset'$ by a random binary 
 tree on leaf set $\leafset'$.  

\begin{claim}\label{lem:max-exp}
Let $T$ be an internally binary tree, and suppose that $\mathbb{E}[I(T)] \geq |\Q|/3$.
Let $v$ be a non-binary node of $T$, and let $T_v$ be the 
tree obtained after splitting $v$ using the above algorithm.
Then $\mathbb{E}[I(T_v)] \geq |\Q|/3$.
\end{claim}

Let $\{v_1, \ldots, v_m\}$ be the children of $v$. To prove the claim, we use induction on the number of processed children of $v$ to show that after each insertion of a child $v_i$, the obtained tree $T_i \in \{T_{i,x}, T_{i,y}\}$ satisfies $\mathbb{E}[I(T'_i)] \geq |\Q|/3$, where $T'_i \in \{T'_{i, x}, T'_{i, y}\}$ is the random tree corresponding to $T_i$ obtained from the above process (i.e.\ reinserting $v_{i + 1}, \ldots, v_m$ randomly under $x$ or $y$, and resolving non-binary nodes randomly).  This proves the statement since $T_m = T_v$
(and thus $\mathbb{E}\left[I(T_v)\right] = \mathbb{E}\left[I(T_m)\right] =\mathbb{E}[I(T^{'}_m)] \geq |\Q|/3$).
As a base case, if $i = 1$ it is easy to see that $T'_{1,x}$ and $T'_{1,y}$ are identical, and that $\mathbb{E}[I(T'_{1, x})] = \mathbb{E}[I(T'_{1, y})] = \mathbb{E}[I(T)] \geq |\Q|/3$.  For $i > 1$, let $T_{i - 1}$ be the tree obtained after inserting $v_{i - 1}$, and suppose without  loss of generality that $T_{i - 1} = T_{i - 1, x}$.
Because, in $T'_{i - 1, x}$, we insert $v_i$ below $x$ or $y$ each with probability $\frac{1}{2}$,  we have

\begin{align*}
\mathbb{E}\left[I(T'_{i - 1, x})\right] &= 
\frac{1}{2} \mathbb{E}\left[I(T'_{i - 1,x}) | v_i \mbox{ is a child of }x\right] + \frac{1}{2} \mathbb{E}\left[I(T'_{i - 1, x}) | v_i \mbox{ is a child of } y\right] \\
&= \frac{1}{2} \left(\mathbb{E}\left[I(T'_{i, x})\right] + \mathbb{E}\left[I(T'_{i, y})\right] \right)
\end{align*}

By induction, we also have $\mathbb{E}[I(T'_{ i - 1, x})] \geq |\Q|/3$.  
Combined with the above equality, we obtain $\frac{1}{2} \left(\mathbb{E}\left[I(T'_{i, x})\right] + \mathbb{E}\left[I(T'_{i, y})\right] \right) \geq |\Q|/3$.
This implies that one of $\mathbb{E}[I(T'_{i, x})]$ 
or $\mathbb{E}[I(T'_{i, y})]$ must be at least $|\Q|/3$.
\popQED

Since the fully unresolved tree $T$ from which we start satisfies $\mathbb{E}[I(T)] \geq |\Q|/3$, Claim~\ref{lem:max-exp}
shows that the algorithm does terminate with a tree containing at least $|\Q|/3$ quartets from $\Q$.  It remains to be show how to compute, when reinserting a node $v_i$, the expectations for $T'_{i,x}$ and $T'_{i,y}$.

In fact, it suffices to be able to compute, for a given quartet $q=ab|cd$, the probability $\Pr[q \in Q(T')]$ for $T' \in \{T'_{i,x}, T'_{i,y}\}$.  
Moreover, if $\Pr[q \in Q(T'_{i,x})] = \Pr[q \in Q(T'_{i,y})]$, then this probability does not contribute to determining which scenario maximizes expectation, and in this case we do not need to consider $q$.  
In particular, if none of $a,b,c,d$ is equal to $v_i$, then $\Pr[q \in Q(T'_{i,x})] = \Pr[q \in Q(T'_{i,y})]$.  Therefore, it is enough 
to consider only quartets in which $v_i$ is included.
We will assume that $v_i = a$.
Moreover, we may assume that two or three of $\{b,c,d\}$ are children of $v$ in $T$ (recall that $v$ is the parent of $v_i$ in $T$), because otherwise the probability that 
$ab|cd$ is in $T'$ is unaffected by whether $a$ is a child of $x$ or a child of $y$.

There are still multiple cases depending on which of $b,c$ and $d$ are children of $v$, and which have been reinserted or have not, but 
%It would be %possible to analyze $\Pr[q \in Q(T')]$ for each possible case, 
%but there are many of these and instead, 
this probability can be easily found algorithmically.
Let $U = \{b,c,d\} \cap \{v_{i + 1}, \ldots, v_m\}$, i.e.\ the leaves in $\{b,c,d\}$ that have not been reinserted yet in $T'$.  
We obtain new trees $S'_1, \ldots, S'_h$ by reinserting, in $T'$, the members of $U$ below $x$ or $y$ in every possible way -- there are only
$2^{|U|} \leq 8$ possibilities, so $h \leq 8$.  Then, for 
$ 1 \leq j\leq h$
denote by $S'_j|_q$ the tree $S'_j$ restricted to $\{a,b,c,d\}$
(i.e.\ obtained by removing every leaf not in $\{a,b,c,d\}$, 
then contracting degree $2$ vertices).  Note that $S'_j|_q$ may be non-binary.
%We obtain new trees $T'_1, \ldots, T'_h$ by reinserting, in $T'$, the members of $U$ below $x$ or $y$ in every possible way - there are only
% $2^{|U|} \leq 8$ possibilities\cel{, so $h \leq 8$}.  Then, for 
% \cel{$ i \leq j\leq m$}
% denote by $T'_j|_q$ the tree $T'_j$ restricted to $\{a,b,c,d\}$
% (i.e.\ obtained by removing every leaf not in $\{a,b,c,d\}$, 
% then contracting degree $2$ vertices).
We get $\Pr[q \in Q(T')] = \sum_{j = 1}^h \frac{1}{h} \Pr[q \in Q(S'_j|_q)]$.  This is because every leaf in $v_{i + 1}, \ldots, v_m$ other than $b,c,d$ is resinserted
independently from the choice for $b,c,d$, and every non-binary node remaining after the reinsertions is resolved uniformly. The probability $\Pr[q \in Q(S'_j|_q)]$ is straightforward to compute, as only a constant number of cases can occur since $S'_j|_q$ has only $4$ leaves.
We omit the details.

{\bf Time complexity:} we must first preprocess the input in order to compute $f_{\Q}(q)$ for each quartet $q$.  This takes 
time $O(kn^4)$.
As for the computation of $\Pr[q \in Q(T')]$, assume that the lowest common ancestor ($lca$) of two leaves can be found in constant time.  This can be achieved naively by simply storing the $lca$ for each pair of leaves in a table of size $O(n^2)$, and updating the table in time $O(n)$ each time a decision on some $v_i$ is made (this does not hinder the total time complexity of the algorithm, though there are more clever ways to handle dynamic tree $lca$ queries~\cite{cole1999dynamic}).  Then the restrictions $S'_1|_q, \ldots, S'_h|_q$ can be computed in constant time.
It is then straightforward to see that, by the above process, 
$\Pr[q \in Q(T')]$  can be computed in constant time.
Each time a node $v_i$ needs to be reinserted, 
this probability must be computed for the $O(n^3)$ quartets containing $v_i$.  There are $n - 1$ splits to be performed, and each split requires inserting $O(n)$ nodes.  Thus the ``binarization'' process takes total time $O(n^5)$, and altogether the derandomization takes time $O(kn^4 + n^5)$.
\qed
\end{proof}

The above leads to a (deterministic) $1/3$- approximation. This can be used to show the following.
The proof is similar to that of~\cite[Theorem 2]{byrka2010new} and is relegated to  Appendix  \ref{app:thm:approx-from-wmqi}.

\begin{theorem}\label{thm:approx-from-wmqi}
If WMQI can be approximated within a factor $\alpha$, then 
\maxwqtet~can be approximated within a factor $\beta = \alpha/(3\alpha - 2)$.
\end{theorem}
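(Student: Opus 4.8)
The plan is to leverage the fact that \maxwqtet~and WMQI share the same optimal solutions, and to combine two approximations: the derandomized factor $1/3$ approximation to \maxwqtet~from Lemma~\ref{lem:can-derandomize}, and the given factor $\alpha$ approximation to WMQI. Fix an instance $\T = \{T_1, \ldots, T_k\}$ and write $N = k\binom{n}{4} = |\Q|$ for the total number of (multiset) quartets, so that for any binary tree $M$ on $\leafset$ the quantities $\mathit{wqc}(M) := \sum_{T \in \T}|Q(M) \cap Q(T)|$ and $\mathit{wmqi}(M) := \sum_{T \in \T} d_Q(M,T)$ satisfy $\mathit{wqc}(M) + \mathit{wmqi}(M) = N$, since each of the $\binom{n}{4}$ quadsets contributes to exactly one of the two sums for each input tree. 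Let $M^*$ be an optimal tree, common to both problems; write $OPT = \mathit{wqc}(M^*)$, so $\mathit{wmqi}(M^*) = N - OPT$.

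First I would run the $\alpha$-approximation for WMQI to obtain a tree $M_1$ with $\mathit{wmqi}(M_1) \leq \alpha(N - OPT)$, hence $\mathit{wqc}(M_1) = N - \mathit{wmqi}(M_1) \geq N - \alpha(N - OPT)$. Second, I would run the derandomized algorithm of Lemma~\ref{lem:can-derandomize} to obtain a tree $M_2$ with $\mathit{wqc}(M_2) \geq N/3$. The algorithm then outputs whichever of $M_1, M_2$ has the larger $\mathit{wqc}$ value; call its value $APP$. Both steps run in polynomial time, so the combined algorithm does too.

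It remains to verify $APP \geq \beta \cdot OPT$ with $\beta = \alpha/(3\alpha - 2)$. The key observation is that $APP \geq \lambda \cdot \mathit{wqc}(M_1) + (1-\lambda)\cdot \mathit{wqc}(M_2)$ for any $\lambda \in [0,1]$, since $APP = \max\{\mathit{wqc}(M_1), \mathit{wqc}(M_2)\}$ dominates every convex combination. Plugging in the two bounds gives
\begin{equation*}
APP \geq \lambda\bigl(N - \alpha(N - OPT)\bigr) + (1-\lambda)\frac{N}{3}.
\end{equation*}
I would then choose $\lambda$ so that the coefficient of $N$ in this expression vanishes (leaving only a positive multiple of $OPT$); solving $\lambda(1 - \alpha) + (1-\lambda)/3 = 0$ yields $\lambda = 1/(3\alpha - 2)$, which lies in $[0,1]$ for $\alpha \geq 1$. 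Substituting back, the $N$-terms cancel and the surviving term is $\lambda \alpha \cdot OPT = \frac{\alpha}{3\alpha-2} OPT = \beta \cdot OPT$, as desired. Since any reasonable WMQI approximation has $\alpha < 2$ (indeed $\alpha = 2$ is achievable by Theorem~\ref{thm:wmqi-approx}), $\beta$ is a genuine constant strictly between $1/3$ and $1$; in particular $\alpha = 2$ recovers the $\beta = 1/2$ bound claimed in the introduction.

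The only mild subtlety — and the step I expect to require the most care — is the identity $\mathit{wqc}(M) + \mathit{wmqi}(M) = N$ for \emph{binary} $M$: it relies on the fact that a binary tree displays exactly one quartet per quadset, so for each input tree $T$ and each quadset, either $M$ agrees with $T$ on it (counted by $\mathit{wqc}$) or it does not (counted by $d_Q(M,T)$), with no third option. This is where binariness of the output is essential, and it is worth stating explicitly; everything else is the convex-combination bookkeeping above. This mirrors the argument of~\cite[Theorem 2]{byrka2010new}.
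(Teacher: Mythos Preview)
Your proof is correct and follows essentially the same approach as the paper: run both the derandomized $1/3$-approximation and the $\alpha$-approximation for WMQI, output the better tree, and exploit the identity $\mathit{wqc}(M)+\mathit{wmqi}(M)=N$. The only cosmetic difference is that the paper argues by a case split on whether $OPT\le N/(3\beta)$, whereas you use the (equivalent, slightly slicker) convex-combination trick with $\lambda=1/(3\alpha-2)$; both yield the same bound.
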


Combined with Theorem~\ref{thm:wmqi-approx} and letting $\alpha = 2$ in Theorem~\ref{thm:approx-from-wmqi}
we get the following.

\begin{corollary}
\maxwqtet~can be approximated within a factor $1/2$
in time $O(k^2n^2 + kn^4 + n^5)$.
\end{corollary}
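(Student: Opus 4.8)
## Proof proposal for the Corollary

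The plan is to combine the two main results established in this section in the obvious way. First I would invoke Theorem~\ref{thm:wmqi-approx}, which gives a factor $2$ approximation for WMQI: one simply outputs the input tree $T \in \T$ minimizing $\sum_{T_i \in \T} d_Q(T, T_i)$. Setting $\alpha = 2$ in Theorem~\ref{thm:approx-from-wmqi} then yields an approximation ratio of $\beta = \alpha/(3\alpha - 2) = 2/(6 - 2) = 1/2$ for \maxwqtet. So the approximation guarantee is immediate once the two theorems are in hand; the only thing left to verify carefully is the running time.

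For the time complexity I would account for each ingredient separately. Computing $d_Q(T_1, T_2)$ takes $O(n^2)$ time by the result cited from~\cite{bansal2011comparing}, so evaluating $\sum_{T_i \in \T} d_Q(T, T_i)$ for all $T \in \T$ costs $O(k^2 n^2)$; this is the WMQI factor-$2$ approximation. Next, the derandomized factor-$1/3$ algorithm (Lemma~\ref{lem:can-derandomize}) runs in time $O(kn^4 + n^5)$. Finally, the reduction in Theorem~\ref{thm:approx-from-wmqi} (following~\cite[Theorem 2]{byrka2010new}) combines the outputs of the WMQI approximation and the $1/3$-approximation and takes the better of the two as a solution to \maxwqtet; I would check that the bookkeeping in that reduction — recomputing the \maxwqtet~objective $\sum_{T \in \T}|Q(M) \cap Q(T)|$ for a constant number of candidate trees, which is $O(kn^4)$ per tree, or computing it via $d_Q$ in $O(k n^2)$ — does not dominate. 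Summing the three contributions gives $O(k^2 n^2 + kn^4 + n^5)$.

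The only genuinely delicate point is making sure the proof of Theorem~\ref{thm:approx-from-wmqi} actually runs its comparison on \emph{polynomially many} trees and that evaluating the \maxwqtet~objective for each is within budget; but since \maxwqtet~and WMQI share the same optimal solutions and $\sum_T |Q(M)\cap Q(T)| = k\binom{n}{4} - \sum_T d_Q(M,T)$, the objective of any candidate tree is computable within the stated bound. Hence no step is a real obstacle: the corollary follows by instantiating $\alpha = 2$ and adding up the running times of the WMQI $2$-approximation, the derandomized $1/3$-approximation, and the combination step.
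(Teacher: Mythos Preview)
Your proposal is correct and follows exactly the paper's approach: the corollary is obtained by plugging $\alpha = 2$ from Theorem~\ref{thm:wmqi-approx} into Theorem~\ref{thm:approx-from-wmqi}, and the running time is just the sum of the costs of the WMQI $2$-approximation ($O(k^2n^2)$), the derandomized $1/3$-approximation ($O(kn^4 + n^5)$), and the trivial comparison of the two candidates. Your additional verification that the combination step involves only two candidate trees and that their objective values can be recomputed within the stated bound via $d_Q$ is more explicit than the paper's one-line justification, but entirely in line with it.
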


\section{Fixed-parameter tractability of \maxwqtet\label{FPTAlgos}}

In this section we describe how, based on previous results on the minimum quartet incompatibility problem on complete sets,  \maxwqtet~can be solved in time
$O(4^{d' + k_2' + k_3'}n + n^4)$.  Here $k_2'$ and $k_3'$ are the number of 
quadsets that have $2$ and $3$ dominant quartets, respectively, 
and $d'$ is the number of strictly dominant quartets that we are allowed to reject.
The algorithm makes direct use of  the Gramm-Niedermeyer algorithm~\cite{gramm2001minimum}, 
henceforth called the GN algorithm.

The GN algorithm solves the following problem: 
given a \emph{complete set} of quartets $Q$, find, if it exists, a complete and compatible set of quartets $Q'$ such that 
at most $d$ quartets of $Q'$ are not in the input set $Q$ (i.e.\ $|Q' \setminus Q| \leq d$). 
This is accomplished by repeatedly applying the following theorem:

\begin{theorem}[\cite{gramm2001minimum}]\label{thm:GN}
Let $Q$ be a complete set of quartets.  Then $Q$ is compatible if and only if for each set of five taxa $\{a,b,c,d,e\} \subseteq \leafset$, 
$ab|cd  \in Q$ implies $ab|ce \in Q$ or $ae|cd \in Q$.
\end{theorem}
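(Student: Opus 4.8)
The plan is to prove the two directions separately; $(\Rightarrow)$ is a finite verification, and $(\Leftarrow)$ carries the real weight.

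\textbf{The direction $(\Rightarrow)$.} Assume $Q = Q(T)$ for a binary tree $T$ on $\leafset$, fix five taxa $a,b,c,d,e$, and suppose $ab|cd \in Q$. Then $T|\{a,b,c,d,e\}$ is a five-leaf binary tree (a caterpillar) that restricts to $ab|cd$ on $\{a,b,c,d\}$. One finishes by a short case analysis on the position of $e$ relative to the central edge of $ab|cd$ inside this caterpillar: in each case one reads off directly that $T$ displays $ab|ce$ or that $T$ displays $ae|cd$.

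\textbf{The direction $(\Leftarrow)$.} Proceed by induction on $n = |\leafset|$, the cases $n \le 5$ being handled directly ($n \le 4$ is trivial, and for $n = 5$ the condition is restrictive enough to force $Q = Q(T)$ for a caterpillar $T$, a finite check). For $n \ge 6$, fix $x \in \leafset$ and put $\leafset' = \leafset \setminus \{x\}$, $Q' = Q|_{\leafset'}$. Then $Q'$ is complete on $\leafset'$ and still satisfies the five-taxon condition, so by induction there is a binary tree $T'$ on $\leafset'$ with $Q(T') = Q'$; it remains to re-insert $x$ into $T'$. For each three-element subset $\tau = \{a,b,c\} \subseteq \leafset'$, completeness of $Q$ gives a unique quartet $q_\tau$ on $\{x\}\cup\tau$; write $q_\tau = xa|bc$. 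Let $m_\tau$ be the median vertex of $a,b,c$ in $T'$, and let $R_\tau$ consist of the edges of the component of $T' - m_\tau$ containing $a$ together with the edge joining that component to $m_\tau$. A direct check of which quartet the extended tree displays on $\{x,a,b,c\}$ shows that $R_\tau$ is exactly the set of edges $e$ such that attaching $x$ on $e$ yields a tree $T$ with $T|\{x,a,b,c\} = q_\tau$; in particular each $R_\tau$ is a subtree of $T'$. If the $R_\tau$ have a common edge $e^\star$, then attaching $x$ on $e^\star$ produces a binary tree $T$ that displays $Q'$ and every $q_\tau$, so $Q(T) \supseteq Q$, and since $Q$ is complete $Q(T) = Q$, finishing the induction. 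Because the $R_\tau$ are subtrees of $T'$, a Helly-type argument reduces the existence of such an $e^\star$ to showing that $R_\tau$ and $R_{\tau'}$ share an edge for every pair of triples $\tau, \tau'$.

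\textbf{The main obstacle.} What remains --- and this is where the real work is --- is to deduce $R_\tau \cap R_{\tau'} \ne \emptyset$ from the five-taxon hypothesis alone. When $|\tau \cap \tau'| = 2$ the set $\{x\}\cup\tau\cup\tau'$ has exactly five elements, the hypothesis applies to it directly, and together with the $n = 5$ base case (which provides a five-leaf tree $t$ displaying $Q$ on those taxa) a comparison of $t$ with $T'$ on $\tau\cup\tau'$ forces $R_\tau$ and $R_{\tau'}$ to overlap. The hard case is $|\tau \cap \tau'| \le 1$, where the two triples span up to seven taxa and the five-taxon condition cannot ``see'' all of them at once; here one shows that if $R_\tau$ and $R_{\tau'}$ were disjoint in $T'$ then, from the relative positions of the leaves of $\tau$ and $\tau'$ and of the vertices $m_\tau, m_{\tau'}$ in $T'$, one can exhibit five taxa on which $Q$ violates the condition. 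I expect this case analysis on the configuration of $\tau\cup\tau'$ inside $T'$ to be the crux, and I would carry it out following the argument of~\cite{gramm2001minimum}. (An alternative organisation recasts the inductive step as finding a $Q$-consistent bipartition of $\leafset$ and recursing on each side; the obstacle then shifts to proving such a bipartition always exists, which again is exactly where the five-taxon condition is needed.)
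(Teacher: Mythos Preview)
The paper does not prove this statement at all: Theorem~\ref{thm:GN} is quoted verbatim from~\cite{gramm2001minimum} and used as a black box to motivate the GN branching algorithm. There is therefore no ``paper's own proof'' to compare your attempt against.

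As for the attempt itself, the overall architecture (remove a leaf, realise the remaining quartets by induction, then reinsert the leaf on an edge consistent with all quartets involving it) is a standard and sound way to prove this kind of local-to-global compatibility result. Two remarks. First, the Helly step deserves a line of justification: subtrees of a tree satisfy Helly for \emph{vertices}, not automatically for \emph{edges}, and one does need the extra observation that each $R_\tau$ is a full branch at $m_\tau$ (so any two branches rooted at the same vertex in different directions are edge-disjoint, forcing at most one ``thin'' branch through a common vertex). Second, the case $|\tau\cap\tau'|\le 1$ is indeed where the five-taxon hypothesis does real work, and you explicitly defer it to~\cite{gramm2001minimum}; that is fine as a pointer, but it means your write-up is a proof \emph{sketch}, not a self-contained proof. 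Neither of these is a fatal gap --- both can be filled --- but since the present paper offers no argument of its own here, there is nothing further to compare.
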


The idea behind the GN algorithm is as follows:  
find a set of five taxa $\{a,b,c,d,e\}$ that does not satisfy the condition of Theorem~\ref{thm:GN}, then correct the situation by branching into the four possible choices:
\begin{enumerate}
\item remove $ab|cd$ from $Q$ and add $ac|bd$ to $Q$; 
\item remove $ab|cd$ from $Q$ and add $ad|bc$ to $Q$;
\item remove $\{ac|be$,$ae|bc\} \cap Q$  from $Q$ and add $ab|ce$ to $Q$;
\item  remove $\{ac|de$,$ad|ce\} \cap Q$  from $Q$ and add $ae|cd$ to $Q$.  
\end{enumerate}
The quartets added to $Q$ will not be questioned in the following branchings.  
With some optimization, this leads to a $O(4^{d}n + n^4)$ FPT algorithm.

In~\cite{gramm2001minimum}, the authors also note that this algorithm can be extended to sets of quartets $Q$ that contain
ambiguous quadsets, i.e.\ sets $\{a,b,c,d\}$ for which $2$ or $3$ of the possible quartets on $\{a,b,c,d\}$ are in $Q$. 
Suppose there are $k_2$ and $k_3$, respectively, quadsets
that have $2$ and $3$ quartets in $Q$.
The modified algorithm then, in a first phase, branches into the $2^{k_2}3^{k_3}$ ways of choosing one quartet per such quadset, thereby obtaining a complete set of quartets for each possibility.
%(Again, the quartets chosen in these ``completeness" branchings will not be questioned in the next branchings.) 
The GN algorithm is thus applied to the so-obtained complete sets. 
%If we denote the number of sets of four labels with zero, two and three quartets respectively by $k_0$, $k_2$ and $k_3$, 
This yields a $O(2^{k_2} \cdot 3^{k_3} \cdot 4^{d}n + n^4)$ algorithm. 

It is not hard to see that this gives an FPT algorithm for \maxwqtet, where the parameter $k_2$ (resp. $k_3$) is the number of quadsets such that $2$ (resp. $3$) possible quartets appear in the input trees, and $d$ is the number of quartets $ab|cd$ that appear in every input tree, and that we are allowed to discharge.
Note however that, in the consensus setting, there is no reason to believe that $k_2$ and $k_3$ are low - in we fact we believe that $k_2 + k_3$ typically takes values in $\Theta(n^4)$.  One reason
is that even the slightest amount of noise on a quadset makes it included in the count of either $k_2$ or $k_3$ (e.g. if $k - 1$ trees agree on $ab|cd$ and only one contains $ac|bd$).

The GN algorithm can, however, be used on a more interesting 
set of parameters.  
Define $k_2'$ (resp. $k_3'$) as the number of quadsets 
that have exactly $2$ (resp. $3$) dominant quartets, and 
let $d'$ be the number of strictly dominant quartets 
that we are allowed to discharge.
It is reasonable to believe that, if each tree of the input is close to the true tree $T^*$, most ``true'' quartets will appear
as strictly dominant in the input, and there should not be too many ambiguous quadsets.
There is a very simple algorithm achieving time $O(4^{d' + k_2' + k_3'}n + n^4)$.
Construct a complete set of quartets $Q$ as follows:
for each quadset $\{a,b,c,d\}$, choose a dominant quartet
on $\{a,b,c,d\}$ and add it to $Q$ (if multiple choices are possible, choose arbitrarily).
Then, run the GN algorithm on $Q$ with the following modification:
each time a quartet $q$ is removed from $Q$ and replaced by another quartet $q'$, decrement either $d', k_2'$ or $k_3'$, depending on whether $q$ belongs to a quadset with $1, 2$ or $3$ dominant quartets.  It follows that if there exists a complete and compatible set of quartets $Q'$ such that at most $d'$ strictly dominant quartets are rejected, then the modified algorithm will find it.  It should be noted however that finding such a set $Q'$ does not guarantee that the corresponding tree is an optimal solution.  Indeed, since quartets are weighted, two solutions $Q'$ and $Q''$ may both reject only $d'$ strictly dominant quartets, yet one has higher weight than the other.
However, the correctness of the algorithm follows from the fact that the GN algorithm finds the set of \emph{every} solution discarding at most $d'$ dominant quartets - and thus it suffices to pick the solution from this set that has optimal weight.

We finally mention that the FPT algorithms published in~\cite{chang2010new} are improved versions of the GN algorithm, can also return every solution and thus can be modified in the same manner.
These yield FPT algorithms that can solve \maxwqtet~in time $O(3.0446^{d' + k_2' + k_3'}n+ n^4)$  and $O(2.0162^{d' + k_2' + k_3'}n^3 + n^5)$.

\section{Conclusion}

In this paper, we have shown that the \maxwqtet~problem is NP-hard, answering a question of~\cite{MirarabThesis} and~\cite{bansal2011comparing}.
In the latter, the authors also propose a variant of the problem in which the output tree $T$ is not required to be binary.
In this case, %$T$ does not have a resolved quartet for every quadset, and 
one needs to assign a cost $p$ to the unresolved quartets. 
Our reduction can be extended to show that hardness holds for high enough $p$, but the complexity of the general case remains open.
We have also shown that~\maxwqtet~can be approximated within a factor 1/2.  One open question is whether the problem admits a PTAS as the related CMQC problem.  
%As for fixed-parameter tractability, we have shown how datasets that are not too ambiguous can be handled relatively quickly. 
%For less well-behaved input trees, it remains to find
The fixed-parameter tractability aspects of WQC also deserve further investigation.
This would require identifying some structural properties that are present in the consensus setting and that can be used for designing practical exact algorithms.
But as we have shown, this might not be an easy task, as many properties which seem reasonable for the consensus setting do not hold.

\bibliography{p23-lafond}

\begin{thebibliography}{10}

\bibitem{aho1981inferring}
Alfred~V Aho, Yehoshua Sagiv, Thomas~G Szymanski, and Jeffrey~D Ullman.
\newblock Inferring a tree from lowest common ancestors with an application to
  the optimization of relational expressions.
\newblock {\em SIAM Journal on Computing}, 10(3):405--421, 1981.

\bibitem{bansal2011comparing}
Mukul~S Bansal, Jianrong Dong, and David Fern{\'a}ndez-Baca.
\newblock Comparing and aggregating partially resolved trees.
\newblock {\em Theoretical Computer Science}, 412(48):6634--6652, 2011.

\bibitem{barthelemy1986median}
Jean-Pierre Barth{\'e}lemy and Fred~R McMorris.
\newblock The median procedure for n-trees.
\newblock {\em Journal of Classification}, 3(2):329--334, 1986.

\bibitem{berry2000practical}
Vincent Berry, David Bryant, Tao Jiang, Paul Kearney, Ming Li, Todd Wareham,
  and Haoyong Zhang.
\newblock A practical algorithm for recovering the best supported edges of an
  evolutionary tree (extended abstract).
\newblock In {\em Proceedings of the Eleventh Annual ACM-SIAM Symposium on
  Discrete Algorithms}, pages 287--296. Society for Industrial and Applied
  Mathematics, 2000.

\bibitem{berry1999quartet}
Vincent Berry, Tao Jiang, Paul Kearney, Ming Li, and Todd Wareham.
\newblock Quartet cleaning: Improved algorithms and simulations.
\newblock In Jaroslav Ne{\v{s}}et{\v{r}}il, editor, {\em Proceedings of the 7th
  Annual European Symposium on Algorithms}, pages 313--324. Springer Berlin
  Heidelberg, 1999.

\bibitem{bryant1997building}
David Bryant.
\newblock {\em Building trees, hunting for trees, and comparing trees}.
\newblock PhD thesis, University of Canterbury, 1997.

\bibitem{bryant2003classification}
David Bryant.
\newblock A classification of consensus methods for phylogenetics.
\newblock {\em DIMACS Series in Discrete Mathematics and Theoretical Computer
  Science}, 61:163--184, 2003.

\bibitem{bryant2001constructing}
David Bryant and Mike Steel.
\newblock Constructing optimal trees from quartets.
\newblock {\em Journal of Algorithms}, 38(1):237--259, 2001.

\bibitem{byrka2010new}
Jaroslaw Byrka, Sylvain Guillemot, and Jesper Jansson.
\newblock New results on optimizing rooted triplets consistency.
\newblock {\em Discrete Applied Mathematics}, 158(11):1136--1147, 2010.

\bibitem{chang2010new}
Maw-Shang Chang, Chuang-Chieh Lin, and Peter Rossmanith.
\newblock New fixed-parameter algorithms for the minimum quartet inconsistency
  problem.
\newblock {\em Theory of Computing Systems}, 47(2):342--367, 2010.

\bibitem{cole1999dynamic}
Richard Cole and Ramesh Hariharan.
\newblock Dynamic{ LCA} queries on trees.
\newblock In {\em Proceedings of the Tenth Annual ACM-SIAM Symposium on
  Discrete Algorithms}, pages 235--244. Society for Industrial and Applied
  Mathematics, 1999.

\bibitem{felsenstein2004inferring}
Joseph Felsenstein.
\newblock {\em Inferring phylogenies}.
\newblock Sinauer Associates Sunderland, 2004.

\bibitem{galil1977cyclic}
Zvi Galil and Nimrod Megiddo.
\newblock Cyclic ordering is {NP}-complete.
\newblock {\em Theoretical Computer Science}, 5(2):179--182, 1977.

\bibitem{gramm2001minimum}
Jens Gramm and Rolf Niedermeier.
\newblock Minimum quartet inconsistency is fixed parameter tractable.
\newblock In Amihood Amir, editor, {\em Proceedings of the 12th Annual
  Symposium of Combinatorial Pattern Matching}, pages 241--256. Springer Berlin
  Heidelberg, 2001.

\bibitem{jansson2001complexity}
Jesper Jansson.
\newblock On the complexity of inferring rooted evolutionary trees.
\newblock {\em Electronic Notes in Discrete Mathematics}, 7:50--53, 2001.

\bibitem{jiang2001polynomial}
Tao Jiang, Paul Kearney, and Ming Li.
\newblock A polynomial time approximation scheme for inferring evolutionary
  trees from quartet topologies and its application.
\newblock {\em SIAM Journal on Computing}, 30(6):1942--1961, 2001.

\bibitem{margush1981consensusn}
Timothy Margush and Fred~R McMorris.
\newblock Consensus$n$-trees.
\newblock {\em Bulletin of Mathematical Biology}, 43(2):239--244, 1981.

\bibitem{mcmorris1983view}
Fred~R McMorris, David~B Meronk, and Dean Neumann.
\newblock A view of some consensus methods for trees.
\newblock In {\em Numerical Taxonomy}, pages 122--126. Springer, 1983.

\bibitem{MirarabThesis}
Siavash Mirarab.
\newblock {\em Novel scalable approaches for multiple sequence alignment and
  phylogenomic reconstruction}.
\newblock PhD thesis, University of Texas at Austin, 2015.

\bibitem{mirarab2014astral}
Siavash Mirarab, Rezwana Reaz, Md.~Shamsuzzoha Bayzid, Th{\'e}o Zimmermann,
  M~Shel Swenson, and Tandy Warnow.
\newblock {ASTRAL}: genome-scale coalescent-based species tree estimation.
\newblock {\em Bioinformatics}, 30(17):i541--i548, 2014.

\bibitem{morgado2008pseudo}
Ant{\'o}nio Morgado and Joao Marques-Silva.
\newblock A pseudo-boolean solution to the maximum quartet consistency problem.
\newblock {\em arXiv preprint arXiv:0805.0202}, 2008.

\bibitem{morgado2010combinatorial}
Ant{\'o}nio Morgado and Joao Marques-Silva.
\newblock Combinatorial optimization solutions for the maximum quartet
  consistency problem.
\newblock {\em Fundamenta Informaticae}, 102(3-4):363--389, 2010.

\bibitem{sokal1981taxonomic}
Robert~R Sokal and F~James Rohlf.
\newblock Taxonomic congruence in the leptopodomorpha re-examined.
\newblock {\em Systematic Zoology}, 30(3):309--325, 1981.

\bibitem{steel1992complexity}
Michael Steel.
\newblock The complexity of reconstructing trees from qualitative characters
  and subtrees.
\newblock {\em Journal of Classification}, 9(1):91--116, 1992.

\bibitem{wu2005lookahead}
Gang Wu, Jia-Huai You, and Guohui Lin.
\newblock A lookahead branch-and-bound algorithm for the maximum quartet
  consistency problem.
\newblock In {\em International Workshop on Algorithms in Bioinformatics},
  pages 65--76. Springer, 2005.

\bibitem{wu2007quartet}
Gang Wu, Jia-Huai You, and Guohui Lin.
\newblock Quartet-based phylogeny reconstruction with answer set programming.
\newblock {\em IEEE/ACM Transactions on Computational Biology and
  Bioinformatics}, 4(1):139--152, 2007.

\end{thebibliography}

\appendix

\section{Implications for the \textsc{Weighted Triplet  Consensus} problem}\label{sec:triplets}

For each set of three labels $\{a,b,c\}$, there \new{are three non-isomorphic}\footnote{\new{Isomorphism preserving labels and the root node.}} rooted binary trees called \emph{triplets}. They are denoted by $ab|c$, $ac|b$ and $bc|a$, depending on the leaf having the root as father ($c$, $b$ and $a$ respectively). We say that 
\newml{a tree $T$} induces or displays the triplet $ab|c$ if $T|\{a,b,c\}=ab|c$. For a rooted tree $R$, denote by $tr(R)$ the set of  triplets of $R$.

When the consensus is sought for rooted trees,  the objective is
to find a rooted tree $M$ that induces a maximum number of 
 triplets contained in the input trees.  
%The authors conjectured that this problem is NP-Hard, and the above ideas can 
%be used to prove this conjecture.
The \textsc{Weighted Triplet  Consensus} (WTC) is defined as follows.

\noindent \textbf{\textsc{Weighted Triplet  Consensus} (WTC) problem} \\
\noindent {\bf Input}: a set of  
rooted trees $\R = \{R_1, \ldots, R_k\}$ such that 
$\L(R_1) = \ldots = \L(R_k) = \leafset$.\\
\noindent {\bf Output}: a binary rooted tree $M$ with $\L(M) = \leafset$ that maximizes 
$\sum_{R \in \R} |tr(M) \cap tr(R)|$. \\

As in the unrooted problem, other versions of WTC where the input trees  may have missing species or where the weight of a triplet is not defined w.r.t. a set of trees, are known to be NP-hard \cite{bryant1997building}. The WTC problem is conjectured to be NP-hard in \cite{bansal2011comparing}
(we note that a more general version where the output can be non-binary is also conjectured NP-hard).

We give the main idea behind the proof of the hardness of WTC.
Let $\T = \bigcup_{C \in \C}\T(C)$ be the set of unrooted trees constructed in the 
reduction above.  For a tree $T \in \T$, let $e$ be the edge separating $Z$ from the rest of the tree
(i.e.\ by removing $e$ from $T$, one connected component is exactly $Z$).  
Obtain a rooted tree $R$ from $T$ by rooting $T$ at $e$, that is subdivide $e$, thereby creating a degree 
$2$ vertex which is the root of $R$.  The set of rooted trees $\R$ is obtained by applying this rooting
to every $T \in \T$ (the $Z$ subtree could be removed but we keep it here to make the correspondence 
easier to see).

Similarly as above, it can be shown that since every input tree is a rooted $(W,Z)$-caterpillar, 
then any solution must also have this form.  This implies in turn that 
there exists a linear ordering of $S$ satisfying $\C$ if and only if there is a solution $M$ to WTC containing 
every triplet from the input on $2$ or $3$ members of $\L(W)$, 
every triplet containing at least one member of $\L(Z)$, plus at least $4m|W| + 3m|W|\left(  \binom{n - 2}{2} + 2(n - 2) \right)$ 
triplets of the form $wa|b$ with $a,b \in S$.  
\newml{This is obtained by defining the notions of \emph{in-triplets} and \emph{out-triplets}
analogously as in the previous section, but with respect to $W$ only.  That is, in a ``$a < b$'' tree, for $a,b,c, d  \in S, w \in \L(W)$ and $\{c, d\} \neq \{a, b\}$, $wa|b$ would be an in-triplet, whereas $wc|d$ or $wd|c$ would be out-triplets.
One can argue that for a cyclic triple $(a,b,c) \in \C$ and 
the set of trees $\T((a,b,c))$, an optimal consensus tree can contain 
$4|W|$ of the $6|W|$ possible in-triplets, plus at most half of the 
$6m|W|\left(  \binom{n - 2}{2} + 2(n - 2) \right)$ possible out-triplets}.  The arguments are essentially the same as the ones given in the hardness 
proof of \maxwqtet, and so we omit the details.

\begin{theorem}
The \textsc{Weighted Triplet  Consensus} problem is NP-hard.
\end{theorem}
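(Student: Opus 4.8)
The plan is to mimic the hardness reduction for \maxwqtet, reusing the very same \textsc{Cyclic Ordering} instance $S, \C$ and essentially the same family of trees, but now viewed as rooted trees. Concretely, I would take the set $\T = \bigcup_{C \in \C}\T(C)$ of $(W,Z)$-augmented caterpillars built in Section~\ref{sec:hardness}, and root each $T \in \T$ on the edge $e$ separating $Z$ from the rest of the tree (subdividing $e$ to create the degree-$2$ root). Let $\R$ be the resulting set of rooted trees. The first step is a structural lemma, the rooted analogue of Lemma~\ref{lem:all-wz-augm}: any optimal consensus tree $M$ for $\R$ must itself be a rooted $(W,Z)$-caterpillar (i.e.\ of the form obtained by rooting $(W|T_1|\dots|T_k|Z)$ on the $Z$-edge). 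The argument is the same weighting/counting argument as in the unrooted case — the huge size $|W| = |Z| = (nm)^{100}$ forces any optimal solution to ``respect'' the bulky $W$ and $Z$ blocks — so I would only sketch it and point to the unrooted proof.

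Next I would set up the rooted bookkeeping. Define, for a tree $T \in \R$ that is an ``$a<b$'' tree, the \emph{in-triplets} $wa|b$ (for $w \in \L(W)$) and the \emph{out-triplets} $wc|d$ or $wd|c$ (for $\{c,d\} \neq \{a,b\}$, $c,d \in S$), exactly as suggested in the excerpt, but now only with respect to $W$ (there is no ``$Z$-side'' contribution to worry about because every triplet with a member of $\L(Z)$ is automatically a root-side triplet common to all rooted $(W,Z)$-caterpillars, just as the $B(T)$ quartets were in the unrooted case). Then I would prove the triplet analogues of Remark~\ref{rem:k} (every rooted $(W,Z)$-caterpillar contains all triplets on $\geq 2$ members of $\L(W)$ and all triplets involving $\L(Z)$) and of Lemma~\ref{lem:out-quartets} (any consensus tree contains at most half of $\biguplus_{T \in \R} out(T)$, because for a reversed pair the out-triplet $wc|d$ in $\treeord{C}{ab}$ corresponds to $wd|c$ in $\treerev{C}{ab}$, and a rooted tree can contain only one of these; and a caterpillar consistent with a linear order on $S$ attains the bound).

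The combinatorial heart is then identical in spirit to the \maxwqtet\ proof. For a cyclic triple $(a,b,c) \in \C$, the three in-triplets $wa|b$, $wb|c$, $wc|a$ are pairwise incompatible (no rooted tree displays all three), so $M$ contains at most $4|W|$ of the $6|W|$ in-triplets from $\T((a,b,c))$, with equality exactly when $M$'s induced linear order on $S$ satisfies $(a,b,c)$ via Lemma~\ref{lem:triples-to-pairs}; if no linear order satisfies $\C$, some triple forces $M$ down to $2|W|$. Putting these together with the threshold
\[
K' + \O' + 4m|W| + 3m|W|\!\left(\binom{n-2}{2} + 2(n-2)\right)
\]
(where $K'$ counts the forced $\L(W)$-heavy and $\L(Z)$-involving triplets and $\O'$ is half the total out-triplet count), and absorbing the $O((|W|+|Z|)mn^5)$ triplets on few $W,Z$-leaves into the slack as before, yields the ``if and only if'' and hence NP-hardness. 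I expect the only genuine obstacle to be the bipartition-versus-clade subtlety: in the rooted setting the relevant relation is ``$a$ is separated from $b$ towards the root'' rather than a bipartition, so the correspondence between in-triplets $wa|b$ in $M$ and the linear order ($wa|b \in tr(M) \iff T(a) < T(b)$ in the caterpillar ordering reading away from $Z$) must be stated carefully, and the incompatibility of the three rooted triplets $wa|b,wb|c,wc|a$ must be checked (it holds: rooted triplets on $\{w,a,b,c\}$ with a common ``low'' leaf $w$ cannot realise a $3$-cycle). Once that is pinned down, the rest is a routine transcription of the \maxwqtet\ argument, so I would present it at that level of detail and refer back rather than repeat every estimate.
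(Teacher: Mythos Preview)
Your proposal is correct and follows essentially the same route as the paper's sketch: root each $(W,Z)$-augmented caterpillar on the $Z$-edge, argue the rooted analogue of Lemma~\ref{lem:all-wz-augm}, define in-/out-triplets with respect to $W$ only, and run the same in-triplet incompatibility argument together with Lemma~\ref{lem:triples-to-pairs}. One small slip: in your displayed threshold you double-count the out-triplets, since by your own definition $\O' = 3m|W|\bigl(\binom{n-2}{2}+2(n-2)\bigr)$; the intended bound is $K' + \O' + 4m|W|$, matching the paper's count of $4m|W| + 3m|W|\bigl(\binom{n-2}{2}+2(n-2)\bigr)$ for the $wa|b$ triplets.
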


%TMS (\textsc{Triplet median supertree}~cite{bansal2011comparing})
%TAMS (\textsc{Triplet asymmetric median supertree} problem~\cite{ranwez2010supertriplets})

\section{Deferred proofs}

\subsection{Proof of Lemma \ref{lem:all-wz-augm}\label{app:proof-lem:all-wz-augm}}

% First notice that by removing an edge $e$ of a binary tree $T$, 
% we obtain two disjoint rooted subtrees $T_1$ and $T_2$ (i.e.\ each tree has a  degree $2$ vertex).  Call $T_1$ and $T_2$ \emph{rooted subtrees} of $T$. 
% For $X \subset \leafset$, the \emph{rooted subtree of $X$}, denoted $T[X]$, is the minimal (i.e.\ least number of vertices) rooted subtree  of $T$ such that $X \subseteq \L(T[X])$.  Note that 
% $T[X]$ may contain leaves other than $X$.

%Now, s

Despite the Lemma \ref{lem:all-wz-augm} statement being quite intuitive, it requires a surprising amount of care.
We start by a simple proposition that will be needed.

\begin{proposition}\label{prop:setcards}
Let $X, Y$ be two non-empty sets such that $Y \not\subseteq X$.  Then
$|X| \cdot |Y \setminus X| \geq |Y| - 1$.
\end{proposition}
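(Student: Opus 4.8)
The plan is to reduce the inequality to an equivalent form in which the two sides can be compared term by term. The natural decomposition is $|Y| = |Y \cap X| + |Y \setminus X|$, which lets us rewrite the target inequality purely in terms of the quantities $|X|$, $|Y \cap X|$ and $|Y \setminus X|$, and the two hypotheses translate into two very concrete facts: $X$ non-empty gives $|X| \geq 1$, and $Y \not\subseteq X$ gives $|Y \setminus X| \geq 1$.

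\textbf{Key steps.} First I would substitute $|Y| = |Y \cap X| + |Y \setminus X|$ into the desired inequality $|X| \cdot |Y \setminus X| \geq |Y| - 1$ and move the term $|Y \setminus X|$ to the left, obtaining the equivalent statement
\[
(|X| - 1)\cdot |Y \setminus X| \;\geq\; |Y \cap X| - 1 .
\]
Second, I would bound the left-hand side from below: since $X \neq \emptyset$ we have $|X| - 1 \geq 0$, and since $Y \not\subseteq X$ we have $|Y \setminus X| \geq 1$, so $(|X|-1)\cdot |Y \setminus X| \geq |X| - 1$. Third, I would bound the right-hand side from above: because $Y \cap X \subseteq X$ we have $|Y \cap X| \leq |X|$, hence $|Y \cap X| - 1 \leq |X| - 1$. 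Chaining the two bounds through the common value $|X| - 1$ gives the inequality, and unwinding the first (reversible) substitution recovers the original statement.

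\textbf{Main obstacle.} There is essentially no computational obstacle here; the only "trick" is spotting the rearrangement into $(|X|-1)|Y\setminus X| \geq |Y\cap X|-1$, after which both sides are trivially squeezed by $|X|-1$. I would just make sure to note explicitly that the rewriting is an equivalence (so that proving the rearranged inequality suffices) and that both hypotheses — non-emptiness of $X$ and $Y \not\subseteq X$ — are used, the former to know $|X|-1 \geq 0$ so that multiplying by $|Y\setminus X|$ preserves the direction of the inequality, the latter to know $|Y \setminus X| \geq 1$.
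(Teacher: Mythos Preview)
Your proof is correct and is in fact cleaner than the paper's. The paper proceeds by a case split on whether $X \cap Y = \emptyset$: in the disjoint case it uses $|Y\setminus X| = |Y|$ directly, and in the intersecting case it sets $X' = X\cap Y$, bounds $|X|\,|Y\setminus X| \geq |X'|(|Y|-|X'|)$, and then argues by contradiction (with a further sub-case $|X'|=1$ versus $|X'|>1$) that $|X'|(|Y|-|X'|) \geq |Y|-1$. Your single algebraic rearrangement into $(|X|-1)\,|Y\setminus X| \geq |Y\cap X|-1$ and the squeeze through $|X|-1$ handles all of this uniformly, with no case analysis and no contradiction. The trade-off is essentially nil: your route is shorter and uses the two hypotheses in exactly the same places (non-emptiness of $X$ for $|X|-1\geq 0$, $Y\not\subseteq X$ for $|Y\setminus X|\geq 1$), so nothing is lost.
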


\begin{proof}
Suppose first that $X \cap Y = \emptyset$.  Then clearly $|X||Y \setminus X| = |X||Y| \geq |Y| - 1$.
Suppose otherwise that $X \cap Y \neq \emptyset$, and denote $X' = X \cap Y$.  
Then $|Y \setminus X| = |Y| - |X'|$ and 
since $Y \not\subseteq X$, we must have $|Y| \geq |X'| + 1$.
We also have $|X||Y \setminus X| = |X|(|Y| - |X'|) \geq |X'|(|Y| - |X'|)$; \new{ we claim the latter term} to be at least
$|Y| - 1$.
Let us assume for contradiction that $|X'|(|Y| - |X'|) < |Y| - 1$.  If $|X'| = 1$, this is clearly impossible,
so assume $|X'| > 1$.
Then we get $|X'||Y|  - |Y| <  |X'|^2 -1 $ leading to $|Y| < \frac{|X'|^2 - 1}{|X'| - 1} = |X'| + 1$,
contradicting $|Y| \geq |X'| + 1$.
\end{proof}

Before proceeding, 
we must introduce the notion of a rooted subtree of a binary unrooted tree $T$.
Note that by removing an edge $e=\{u,v\}$ of $T$, 
we obtain two disjoint rooted subtrees $T_1$ and $T_2$, respectively rooted at $u$ and $v$.  Call $T'$ a  \emph{rooted subtree} of $T$ if $T'$ is a rooted tree that can be obtained by removing an edge of $T$. 
For $X \subset \L(T)$, a \emph{rooted subtree for $X$} is a rooted subtree $T'$ of $T$ 
such that $X \subseteq \L(T')$.  
We denote by $T[X]$ the rooted subtree for $X$ that contains a minimum number of leaves
(if there are multiple choices, choose $T[X]$ arbitrarily among the possible choices).  Note that 
$T[X]$ may contain leaves other than $X$.

We now prove that any optimal solution to $\T$ as constructed in our reduction must be a $(W, Z)$-augmented caterpillar.
Suppose that $M$ is an optimal solution for $\T$, and that $M$ is not
a $(W, Z)$-augmented caterpillar.  Denote $M_W = M[\L(W)]$ and $M_Z = M[\L(Z)]$.  If $M$  
is a $(W', Z')$-augmented caterpillar  $(W'|T_1|\ldots|T_k|Z')$
for some trees $W', Z'$ with $\L(W') = \L(W)$ and $\L(Z') = \L(Z)$, it is not hard to see that $M' = (W|T_1|\ldots|T_k|Z)$ is a better solution than $M$, a contradiction. Thus,  $M$ is not such a caterpillar, and this implies that either $\L(M_W) \neq \L(W)$ or $\L(M_Z) \neq \L(Z)$ (or both).  That is, the rooted subtrees containing $\L(W)$ and/or $\L(Z)$ have ``outsider'' leaves.  Suppose first that $\L(M_W) \neq \L(W)$ holds.  
Then there exists a node $x$ with children $x_l$ and $x_r$ in $M_W$ such that  
all leaves $X_l$ below $x_l$ are in $\L(W)$ with 
$\L(W) \not\subseteq X_l$ (otherwise $M_W = M[\L(W)]$ would be chosen incorrectly),
%yes, above line is better than the old
%$X_l \neq \L(W)$ (for if equality holds, $M_W = M[\L(W)]$ would be chosen incorrectly),
 and no leaf $X_r$ below $x_r$ 
belongs to $\L(W)$ (this can be seen by observing that the minimal node $x$ 
of $M_W$ having leaves both in $W$ and not in $W$ has this property).  %Note that $\L(W) \not\subseteq X_l$.

We claim that $\L(Z) \not\subseteq X_r$.  Suppose otherwise that $\L(Z) \subseteq X_r$.  
Then $|X_r| \geq |Z|$ and so 
$|M_W| \geq |W| + |Z|$.  However in $M$, by removing the $x x_r$ edge we obtain two rooted trees, one of which
is a rooted subtree for $\L(W)$.  Moreover, this subtree has at most $|W| + |S| < |W| + |Z|$ leaves, 
which contradicts the minimality of $M_W = M[\L(W)]$.  We deduce that $\L(Z)$ is not a subset of $X_r$.

Now, observe that $M$ contains the quartet 
$w_1y|w_2z$ for each $w_1 \in X_l, y \in X_r, w_2 \in \L(W) \setminus X_l, z \in \L(Z) \setminus X_r$.  
There are at least $|X_l||X_r|(|\L(W) \setminus X_l|)(|\L(Z) \setminus X_r|) \geq (|W| - 1)(|Z| - 1)$ such quartets (the inequality is obtained by applying Proposition~\ref{prop:setcards} to $|X_l| \cdot |\L(W) \setminus X_l|$ 
and $|X_r| \cdot |\L(Z) \setminus Z|$).
Moreover, each input tree of $\T$ contains the quartet $w_1w_2|yz$ instead, 
and hence in total in $\T$ there are at least $6m(|W| - 1)(|Z| - 1)$ quartets of the form $w_1w_2|yz$ that $M$ does not contain.
In the same manner, if the case $\L(M_Z) \neq \L(Z)$ holds, then there are at least 
$6m(|W| - 1)(|Z| - 1)$ quartets of the form $z_1z_2|yw$ that $M$ does not contain, where here $z_1, z_2 \in \L(Z), y \notin \L(Z), w \in \L(W)$.

Now, let $\rho(M)$ be the number of quartets that $M$ contains 
 from $\biguplus_{T \in \T}Q(T)$ that have the form $wx|yz$, where 
 $w \in \L(W), z \in \L(Z), x,y \in S$.
Formally, 

$$
\rho(M) = \sum_{\substack{wx|yz \in Q(M)  \\ x, y \in S \\ w \in \L(W) \\ z \in \L(Z)} } f(wx|yz)
$$

where $f(wx|yz)$ denotes the number of trees of $\T$ that contain the $wx|yz$ quartet.
For a given $u \in \L(W) \cup \L(Z)$, let $\rho(M, u)$ denote the number of quartets counted in $\rho(M)$ that contain $u$.  Formally, if $w \in \L(W)$, 
we have 

$$\rho(M, w) = \sum_{\substack{wx|yz \in Q(M)  \\ x, y \in S \\ z \in \L(Z)} } f(wx|yz)$$ 

The definition of $\rho(M, z)$ is the same for $z \in \L(Z)$, except that $z$ gets fixed instead of $w$ in the summation.
Notice that $\rho(M) = \sum_{w \in \L(W)} \rho(M, w) = \sum_{z \in \L(Z)} \rho(M, z)$.  
Let $w^* = \arg \max_{w \in \L(W)}\{\rho(M, w)\}$.  
We obtain an alternative solution $M'$ 
from $M$ in the following manner: remove all leaves of $\L(W) \setminus \{w^*\}$ from $M$, delete the degree $2$ nodes, and replace 
$w^*$ by the $W$ tree.  Note that if $w^*x|yz$ is a quartet of $M$, then 
$wx|yz$ is a quartet of $M'$ for all $w \in \L(W)$, and so 
$\rho(M', w) \geq \rho(M, w)$ for all such $w$ by the choice of $w^*$.
Consequently, $\rho(M') \geq \rho(M)$.
We repeat the same operation on $M'$ for the $Z$ tree and obtain our final tree
$M^*$.  That is, we find 
$z^* = \arg \max_{z \in \L(Z)}\{\rho(M', z)\}$, and replace $z^*$ by the $Z$ tree.  As above, we obtain $\rho(M^*) \geq \rho(M')$. 
Since $M^*$ has $W$ and $Z$ as rooted subtrees, it follows that $M^*$ is a
$(W, Z)$-augmented caterpillar.

We argue that $M^*$ contains more quartets from the input trees than $M$.  First observe that the quartets on which $M$ and $M^*$ differ must contain a member of $\L(W) \cup \L(Z)$, since only these leaves switched position.  The tree $M^*$ contains every quartet of $\biguplus_{T \in \T}Q(T)$ that have at least two members of $\L(W)$, or two members of $\L(Z)$. 
This includes the aforementioned (at least) $6m(|W| - 1)(|Z| - 1)$ quartets of the form 
$w_1w_2|yz$ or $z_1z_2|yw$ that $M$ does not contain.  
As for the quartets that contain one member of $\L(W)$ and one member of $\L(Z)$, $M^*$ contains at least as many such quartets as $M$ since in
$\biguplus_{T \in \T}Q(T)$, these quartets are all of the form $wx|yz$, and we have $\rho(M^*) \geq \rho(M)$.  Finally, each tree of $\T$ has at most $(|W| + |Z|)n^3$ quartets that have exactly one member of $\L(W) \cup \L(Z)$.  Thus at most $6m(|W| + |Z|)n^3$ quartets
of this type are contained by $M$ and not contained by $M^*$, but since this is smaller than $6m(|W| - 1)(|Z| - 1)$ for our choice of $|W|$ and $|Z|$, $M^*$ contains more quartets from the input than $M$.

\subsection{Proofs of Section \ref{sec:nonStructure}\label{sec:nonStructureProofs}}

\begin{figure}
\centering
\includegraphics[scale=0.75]{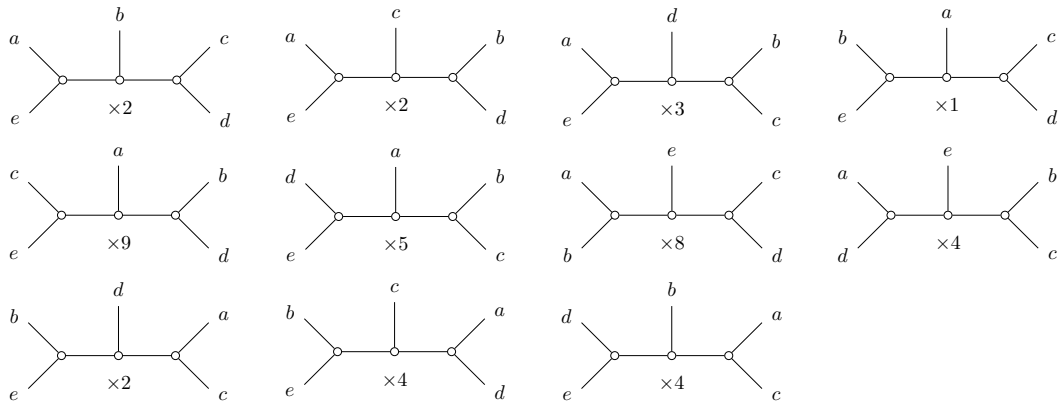}
\caption{An instance of \maxwqtet~such that the optimal solution (the third tree on the first row) contains no strictly dominant quartet.  The numbers correspond to the number of times that each tree appears in the input.}\label{fig:no-dominant}
\end{figure}

\begin{figure}[h]
\centering
\includegraphics[width=0.8\linewidth]{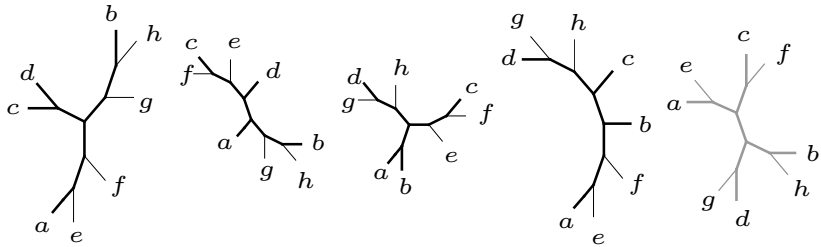}
\caption{The first four trees form an instance of \maxwqtet~in 
which every tree contains $ab|cd$.  The rightmost tree is the unique optimal solution to the \maxwqtet~instance (every possible solution was verified computationally).}\label{fig:no-univqtet}
\end{figure}

%(even the relaxed versions): 
\new{C}onjecture 1 is disproved by Theorem~\ref{thm:no-dominant}, and \new{C}onjecture 3 by Theorem~\ref{thm:no-univqtet}, which implies that \new{C}onjectures 2 and 4 are also false; finally \new{C}onjecture 5 is disproved by 
Theorem~\ref{thm:dont-trash-least-frequent}.

\begin{theorem}
There exists an instance of \maxwqtet~such that every optimal solution contains none of the strictly dominant quartets.\label{thm:no-dominant}
\end{theorem}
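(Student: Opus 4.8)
The plan is to exhibit a small, explicit instance on a fixed (small) leaf set and verify, by exhaustive inspection of all binary trees on that leaf set, that every optimal solution fails to display any strictly dominant quartet. Concretely, I would work with a leaf set of size around $6$ (so there are only $105$ distinct binary topologies, a manageable search space) and design a multiset $\T$ of input trees in which, for several quadsets, there is a strictly dominant quartet, but the ``profile'' of dominant quartets across quadsets is globally incompatible — no single binary tree can simultaneously display all of them. The delicate part is to arrange the weights so that the incompatibility is severe enough that the best way to accumulate quartet weight is to ``sacrifice'' every strictly dominant quartet in favour of middle-frequency alternatives, rather than to keep a maximal compatible subset of dominant ones.

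First I would fix $\leafset$ and pick a candidate optimal tree $T^*$; then I would reverse-engineer the input $\T$ so that $T^*$ collects a large total weight while every quadset on which $T^*$'s quartet is \emph{not} strictly dominant is forced, and every quadset that \emph{does} have a strictly dominant quartet has its dominant quartet displayed by \emph{none} of the optimal trees. A clean way to enforce this: for each quadset $\{a,b,c,d\}$ on which we want a strictly dominant quartet, put slightly more weight on $ab|cd$ than on each of $ac|bd$, $ad|bc$, but choose the collection of these ``winners'' $\{ab|cd\}$ to form a cyclically inconsistent system (in the spirit of the \textsc{Cyclic Ordering} obstruction used in Section~\ref{sec:hardness}), so that honouring many of them is geometrically impossible, while a completely different topology — avoiding all of them — picks up the bulk of the remaining weight. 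The input trees realizing this weight profile are then easy to write down: each is a caterpillar, and I would list the multiplicities (as in Figure~\ref{fig:no-dominant}).

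The key steps, in order: (1) state the instance — give $\leafset$, list the input trees $T_1,\dots,T_k$ with multiplicities, and compute $f_\Q$ on each quadset, identifying the strictly dominant quartets; (2) exhibit the claimed optimal tree $T^*$ and compute its total weight $\sum_{T\in\T}|Q(T^*)\cap Q(T)|$, noting that $T^*$ displays no strictly dominant quartet; (3) argue optimality and uniqueness-up-to-the-property — either by a direct combinatorial bound (for each quadset, upper-bound the weight any tree can collect, sum, and show the bound is met only by trees avoiding all strictly dominant quartets), or, more simply, by appealing to an exhaustive computer check over all $\binom{|\leafset|}{\cdot}$-many binary topologies, exactly as the paper does for Figure~\ref{fig:no-univqtet} (``every possible solution was verified computationally''). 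Given the paper's style, I expect the intended proof is precisely option (3) with the computational verification, so the write-up is short: present the instance in a figure and assert the verified conclusion.

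The main obstacle is step (3) done \emph{by hand}: a purely combinatorial optimality proof requires a tight per-quadset accounting that closes exactly at the weight achieved by $T^*$, and making the arithmetic work out while keeping $\leafset$ small enough to be presentable is fiddly — there is tension between ``few leaves'' (small search space, easy to state) and ``enough room to create an irreparable cyclic conflict among dominant quartets'' (which wants more leaves). The safe route, and almost certainly the one the authors take, is to sidestep this by brute force over all binary trees on a $6$- or $7$-leaf set, reducing the proof to displaying the instance and citing the finite check; I would do the same, reserving any hand computation merely for a sanity-check of $T^*$'s value and of the strict dominance on the relevant quadsets.
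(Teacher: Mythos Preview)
Your overall strategy---exhibit a concrete instance and verify that every optimal tree avoids all strictly dominant quartets---is exactly right, but the specifics diverge from the paper in two ways worth noting.

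First, the leaf count: the paper's instance has only $5$ leaves, not $6$ or $7$. With $|\leafset|=5$ there are only $\binom{5}{4}=5$ quadsets and $15$ unrooted binary trees, which is what makes a by-hand argument feasible. Your concern that ``enough room to create an irreparable cyclic conflict'' might require more leaves is unfounded here; five suffice. (In fact the paper remarks that it is \emph{not known} whether such instances exist for $n>5$, so your plan to work at $n=6$ is not obviously safe unless you are willing to allow some quadsets with no strictly dominant quartet.)

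Second, and more importantly, you guess that the paper relies on an exhaustive computer check for this theorem. It does not; the computational verification is used for Theorem~\ref{thm:no-univqtet}, not here. For Theorem~\ref{thm:no-dominant} the paper gives precisely the kind of ``direct combinatorial bound'' you dismiss as fiddly. The instance is engineered so that on every one of the $5$ quadsets the three quartet frequencies are $17$, $16$, $11$, and the intended tree $T^*$ displays the middle one on every quadset, scoring $5\cdot 16=80$. The crucial structural fact (specific to this instance) is that any tree displaying a strictly dominant quartet on some quadset necessarily displays a least-frequent quartet on some other quadset; hence any such tree scores at most $4\cdot 17+11=79<80$. That single inequality finishes the proof without enumeration. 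This is the key idea your plan does not anticipate, and it is what buys the paper a two-line argument instead of a brute-force check.
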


Figure~\ref{fig:no-dominant} shows an instance of \maxwqtet~demonstrating Theorem~\ref{thm:no-dominant}.
In this instance, for every quadset $S$, there is a strictly dominant quartet appearing $17$ times, whereas the second-most and third-most quartets appear in $16$ and $11$ trees, respectively.  For example, $f(ac|bd) = 17, f(ad|bc) = 16$ and $f(ab|cd) = 11$.
One can check that the best tree is the third one on the top row (the $ae|bc$ with $d$ grafted on the middle branch).  Call this tree $T^*$.  For every quadset $S$, $T^*$ contains 
the second-most frequent quartet on $S$.  The reason why $T^*$ is optimal is that\newml{, in the particular instance of Figure~\ref{fig:no-dominant},} any other tree $T$ that contains a strictly dominant quartet for some quadset $S$ must also contain 
a least frequent quartet on some other quadset $S'$.
Hence, as there are $5$ quadsets, $T$ contains at most $4 \cdot 17 + 11 = 79$ quartets from the input, whereas $T^*$ contains 
$5 \cdot 16 = 80$.
Note that this example consists \new{of} trees on only $5$ leaves.
We do not know if such instances exist for any $n > 5$ leaves.

\begin{theorem}\label{thm:no-univqtet}
There exists an instance of \maxwqtet~such that there is a quartet $q$ that appears in every input tree,
but $q$ is not a quartet of any optimal solution. 
\end{theorem}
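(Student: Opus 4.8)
The plan is to exhibit an explicit small instance of \maxwqtet~on a handful of leaves --- likely $n = 5$ or $6$ --- together with a quartet $q$ that is present in all input trees, and then verify computationally that no optimal solution contains $q$. This is the same flavour of argument as Theorem~\ref{thm:no-dominant}: since the number of binary unrooted trees on a fixed small leaf set is tiny (e.g.\ $15$ for $n = 5$, $105$ for $n = 6$), one can simply enumerate all candidate solutions, score each against the multiset $\Q = \biguplus_{T \in \T} Q(T)$, and confirm that the maximum is attained only by trees avoiding $q$.

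First I would design the input trees $T_1, \ldots, T_k$ so that $q = ab|cd$ (say) is displayed in every $T_i$, yet the ``pressure'' from the remaining quadsets forces an optimal tree to resolve $\{a,b,c,d\}$ differently. The intuition to encode is the one flagged in Section~\ref{sec:nonStructure}: committing to $q$ on one quadset can be incompatible with picking the high-frequency quartets on several other quadsets simultaneously, so that a tree displaying $q$ necessarily loses more weight elsewhere than it gains on $\{a,b,c,d\}$. Concretely, I would make $q$ appear in all $k$ trees but with the other quadsets arranged so that any tree $T$ with $ab|cd \in Q(T)$ is forced to display a low-frequency quartet on some other quadset, while the unique optimum reshuffles $a,b,c,d$ (for instance to $ac|bd$) in exchange for hitting the frequent quartets everywhere else. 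Figure~\ref{fig:no-univqtet} is referenced as containing exactly such a four-tree instance with the rightmost tree as the unique optimum, so the proof is essentially: present the figure, observe $ab|cd$ is in all four input trees, and state that exhaustive verification over all $105$ (or $15$) candidate trees shows the claimed optimum and that it does not display $ab|cd$.

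The key steps in order: (1) fix the leaf set and the input multiset $\T$ as in Figure~\ref{fig:no-univqtet}; (2) check directly from the figure that $ab|cd \in Q(T_i)$ for every $i$; (3) compute $\sum_{T \in \T}|Q(M) \cap Q(T)|$ for the claimed optimal tree $M$ and confirm it is maximal by exhaustive enumeration; (4) observe that $M$ displays $ac|bd$ (or whichever alternative), not $ab|cd$, and that every tree achieving the optimum shares this property --- here even uniqueness of the optimum, as asserted in the figure caption, suffices. As a corollary, since any tree containing $ab|cd$ is suboptimal, Conjectures~2 and~4 also fail: a quartet can dominate even the \emph{sum} of its competitors (by being in every tree) and still be excluded, and conversely the quartet $ac|bd$ that is in the optimum may be absent from all inputs.

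The main obstacle is purely in the construction: one must find an instance small enough that exhaustive checking is convincing, yet with enough structure that the weight lost by displaying $ab|cd$ on its quadset strictly exceeds the weight gained there. Balancing the frequencies across the $\binom{n}{4}$ quadsets so that $ab|cd$ is unanimous but globally counterproductive requires some care --- this is exactly why the authors defer to an explicit figure and a computational check rather than an analytic argument. Once a suitable $\T$ is in hand, the verification is mechanical and the ``proof'' reduces to displaying it and reporting the outcome of the enumeration.
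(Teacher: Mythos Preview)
Your proposal is correct and follows essentially the same approach as the paper: exhibit an explicit small instance (the one in Figure~\ref{fig:no-univqtet}), observe that $ab|cd$ is displayed by every input tree, and verify by exhaustive enumeration that the unique optimal consensus tree does not display $ab|cd$ (the paper reports $180$ versus at most $176$ for any other tree). Your remark that this simultaneously kills Conjectures~2 and~4 is also exactly what the paper notes; the only minor slip is your guess of $n=5$ or $6$, whereas the quartet counts quoted in the paper indicate a larger leaf set, but this does not affect the argument.
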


Figure~\ref{fig:no-univqtet} shows an instance of \maxwqtet~proving Theorem~\ref{thm:no-univqtet}.
Each input tree contains the $ab|cd$ quartet, whereas the optimal solution, which is unique, does not.   The rightmost tree contains $180$ quartets from the input multiset $\Q$, whereas any other tree has at most $176$.

Finally, we note that the main interest behind \new{C}onjecture 5 is the following: \new{if it holds, in cases where the set $F$ of strictly least-frequent quartets is complete we could tell in polynomial time -- using results of~\cite{bryant2001constructing} -- whether there is a tree $T^*$ that contains no quartet from $F$}.   Conjecture 5 could then lead to interesting \new{ approximations} or FPT algorithms.
However, least-frequent quartets cannot be excluded automatically.

\begin{theorem}\label{thm:dont-trash-least-frequent}
There exists an instance of \maxwqtet~such that every optimal 
solution contains a strictly least-frequent quartet, even if there exists a tree $T^*$ with no such quartet.
\end{theorem}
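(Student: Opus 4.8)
The plan is to disprove Conjecture~5 by exhibiting a single explicit \maxwqtet~instance, in the same spirit as the proofs of Theorems~\ref{thm:no-dominant} and~\ref{thm:no-univqtet}: a non-constructive argument seems out of reach for such a finite combinatorial statement, and an explicit small instance is also what makes the phenomenon transparent. Concretely, I would construct a set $\T$ of binary trees, with multiplicities, on a small leaf set $\leafset$ (a handful of taxa, kept small enough that all binary trees on $\leafset$ can be enumerated). The first step is pure bookkeeping: for every quadset $\{a,b,c,d\} \subseteq \leafset$, compute the three frequencies $f(ab|cd)$, $f(ac|bd)$, $f(ad|bc)$, and record which quartet, if any, is \emph{strictly least-frequent} on that quadset. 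This yields the forbidden set $F$ of strictly least-frequent quartets, and the object Conjecture~5 talks about is a tree displaying no quartet of $F$.

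The heart of the construction is to choose the frequency levels so that the constraint ``display no quartet of $F$'' is genuinely restrictive rather than harmless. The design principle is to set up several quadsets on which the dominant quartet and the middle-frequency quartet are pairwise incompatible in such a way that no single tree can simultaneously pick a non-forbidden quartet on \emph{every} quadset and still collect the dominant quartet on enough of them; in other words, avoiding $F$ entirely should force a tree to trade away high-frequency quartets elsewhere. Having fixed such an instance, I would (i) exhibit a concrete tree $T^*$ on $\leafset$ that displays no quartet of $F$ --- verified by checking each of its $\binom{|\leafset|}{4}$ quartets against the table of frequencies --- establishing the hypothesis ``there exists a tree $T^*$ with no such quartet''; and (ii) exhibit a concrete tree $M$ that \emph{does} display a quartet of $F$ and satisfies $\sum_{T \in \T}|Q(M) \cap Q(T)| > \sum_{T \in \T}|Q(T^*) \cap Q(T)|$, so that $T^*$, and indeed any $F$-avoiding tree, is already suboptimal.

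It then remains to establish the stronger claim that \emph{every} optimal solution contains a quartet of $F$. Two routes are available: a hand argument of the type used for Theorem~\ref{thm:no-dominant} --- bound the weight of an arbitrary $F$-avoiding tree by a counting argument (``such a tree must select a dominant or middle quartet on each quadset, but compatibility caps the attainable total at some value $X$'') and compare it with the weight $Y > X$ of $M$ --- or, since $\leafset$ is small, a direct exhaustive verification over all $(2|\leafset|-5)!!$ binary trees, exactly as was done for Theorem~\ref{thm:no-univqtet}. I expect the main obstacle to be the design of the instance itself: one must simultaneously keep $F$ from being so large that it over-constrains every quadset (so that some $T^*$ avoiding $F$ genuinely exists), while making $F$-avoidance tight enough that it provably costs more than a single forbidden quartet is worth, and the instance must stay small enough for the optimality check to be feasible. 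Calibrating the three frequency levels per quadset and choosing which quadsets are the ``conflicting'' ones is where the real work lies; once a working instance is in hand, the remaining verification --- that $T^*$ avoids $F$, that $M$ beats it, and that every optimum uses $F$ --- is routine.
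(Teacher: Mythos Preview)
Your plan is correct and matches the paper's approach: exhibit a small explicit instance on five leaves and verify, either by a counting bound or by enumeration. The paper's only shortcut is that, rather than designing a fresh instance, it recycles the one from Theorem~\ref{thm:no-dominant} (deleting the three copies of the tree that was optimal there), so the structural fact already argued for that instance --- any tree containing a strictly dominant quartet also contains a strictly least-frequent one --- immediately forces every optimum to meet $F$, while the all-second-most-frequent tree serves as $T^*$.
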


The instance corresponding to Theorem~\ref{thm:dont-trash-least-frequent} is obtained from the instance shown in Figure~\ref{fig:no-dominant}, by removing all occurrences of the third tree on the top row 
(i.e.\ this tree now appears $0$ times instead of $3$ times).
The second-most \newml{frequent} quartets now appear $13$ times each, and so the tree $T^*$ that contains all these quartets has a total weight of $5 \cdot 13 = 65$.  However, there are trees with a total weight of $75$, which are optimal (for instance, the tree of cardinality $9$ in the \new{f}igure).  Each such tree contains a strictly dominant quartet, and as mentioned before, also a strictly least-frequent quartet.

\subsection{Proof of Theorem~\ref{thm:approx-from-wmqi} \label{app:thm:approx-from-wmqi}}

Let \new{$N := k{n \choose 4}$}, i.e.\ the total number of quartets in $\Q$, 
let $p$ be the maximum number
of quartets that can be preserved from $\Q$ for compatibility, and let $d$ be the minimum number of quartets to discard from $\Q$ in order to attain compatibility (here $p$ and $d$ refer to multiset cardinalities).
Note that $d = N - p$.
We show that taking the best tree between the one obtained from the factor $\alpha$ algorithm for WMQI  and the one obtained from the ``random-tree-algorithm'' achieves a factor $\beta$ for \maxwqtet.  
Suppose first that $p \leq N/(3\beta)$.  \new{By Lemma \ref{lem:can-derandomize},} the ``random-tree-algorithm'' yields a tree containing at least \newml{$|\Q|/3 = N/3$} quartets from $\Q$, and since 
$N/3 = \beta N/(3\beta) \geq \beta p$, it yields a solution to \maxwqtet~within a factor $\beta$
from optimal.  Thus we may assume that $p > N/(3\beta) = N(3\alpha - 2)/(3\alpha)$.
%Assuming that WMQI can be approximated within a factor $\alpha$, 
Since we have an $\alpha$ approximation for WMQI,
we may obtain a solution discarding at most $\alpha d = \alpha(N - p)$
quartets.  This solution preserves at least $N - (\alpha(N - p)) = \alpha p + (1 - \alpha)N$ quartets from $\Q$.  We claim that this attains a factor $\beta$ approximation.  Suppose instead that $\alpha p + (1 - \alpha)N < \beta p$.
Then $p < (\alpha - 1)N/(\alpha - \beta)$ which, with a little work, yields $p < N(3\alpha - 2)/(3\alpha)$, contradicting our assumption on $p$.
Thus, the  WMQI approximation preserves at least $\beta p$ quartets.

\end{document}